\title{Compiling Crossing-free Geometric Graphs with Connectivity Constraint for Fast Enumeration, Random Sampling, and Optimization}
\titlerunning{Compiling Crossing-free Geometric Graphs with Connectivity Constraint}%optional, please use if title is longer than one line
\author{Yu Nakahata}{Graduate School of Informatics, Kyoto University, Japan}{nakahata.yu.27e@st.kyoto-u.ac.jp}{https://orcid.org/0000-0002-8947-0994}{This work was supported by JSPS KAKENHI Grant Number JP19J21000.}%TODO mandatory, please use full name; only 1 author per \author macro; first two parameters are mandatory, other parameters can be empty. Please provide at least the name of the affiliation and the country. The full address is optional
\author{Takashi Horiyama}{Graduate School of Information Science and Technology, Hokkaido University, Japan}{horiyama@ist.hokudai.ac.jp}{https://orcid.org/0000-0001-9451-259X}{This work was supported by JSPS KAKENHI Grant Numbers JP15H05711 and JP18K11153.}
\author{Shin-ichi Minato}{Graduate School of Informatics, Kyoto University, Japan}{minato@i.kyoto-u.ac.jp}{https://orcid.org/0000-0002-1397-1020}{This work was supported by JSPS KAKENHI Grant Number JP15H05711.}
\author{Katsuhisa Yamanaka}{Faculty of Science and Engineering, Iwate University, Japan}{yamanaka@cis.iwate-u.ac.jp}{https://orcid.org/0000-0002-4333-8680}{This work was supported by JSPS KAKENHI Grant Numbers JP18H04091 and JP19K11812.}
\authorrunning{Y. Nakahata, T. Horiyama, S. Minato, and K. Yamanaka}%TODO mandatory. First: Use abbreviated first/middle names. Second (only in severe cases): Use first author plus 'et al.'
\keywords{Enumeration, Random sampling, Crossing-free spanning tree, Crossing-free spanning cycle, Simple polygonization}%TODO mandatory; please add comma-separated list of keywords
\newcommand{\func}[3]{#1 \colon #2 \to #3}
\newcommand{\size}[1]{\left| #1 \right|}
\newcommand{\bigO}[1]{\mathrm{O}(#1)}
\newcommand{\ostar}[1]{\mathrm{O}^{*}(#1)}
\newcommand{\mf}[1]{\mathfrak{#1}}
\newcommand{\set}[1]{\left\{ #1 \right\}}
\newcommand{\inset}[2]{\left\{ #1 \;\middle|\; #2 \right\}}
\newcommand{\pts}[1]{\mathrm{pts}(#1)}
\newcommand{\lft}[1]{\mathrm{lft}(#1)}
\newcommand{\rgt}[1]{\mathrm{rgt}(#1)}
\newcommand{\low}[1]{\mathrm{low}(#1)}
\newcommand{\upp}[1]{\mathrm{upp}(#1)}
\newcommand{\rex}[1]{\mathrm{rex}(#1)}
\newcommand{\depend}{\sqsubset}
\newcommand{\sgm}{\mathcal{S}_P}
\newcommand{\cfg}{\mathfrak{C}^{\mathsf{cf}}_P}
\newcommand{\str}{\mathfrak{C}^{\mathsf{st}}_P}
\newcommand{\prestr}{\mathfrak{D}^{\mathsf{st}}_P}
\newcommand{\scy}{\mathfrak{C}^{\mathsf{sc}}_P}
\newcommand{\prescy}{\mathfrak{D}^{\mathsf{sc}}_P}
\newcommand{\dsg}{\vec{\mathcal{S}}_P}
\newcommand{\dsc}{\vec{\mathfrak{C}}^{\mathsf{sc}}_P}
\newcommand{\predsc}{\vec{\mathfrak{D}}^{\mathsf{sc}}_P}
\newcommand{\catl}[1]{\mathrm{Cat}(#1)}
\newcommand{\cnt}[1]{\mathrm{cnt}(#1)}
\newcommand{\degr}[2]{\mathrm{deg}(#1, #2)}
\newcommand{\indegr}[2]{d^{\,\mathrm{in}}(#1, #2)}
\newcommand{\outdegr}[2]{d^{\,\mathrm{out}}(#1, #2)}
\begin{document}

\maketitle

%TODO mandatory: add short abstract of the document
\begin{abstract}
% Given $n$ points in the plane, we propose algorithms to compile \emph{connected} crossing-free geometric graphs into data structures.
% The data structures allow efficient counting, enumeration, random sampling, and optimization.
Given $n$ points in the plane, we propose algorithms to compile \emph{connected} crossing-free geometric graphs into directed acyclic graphs (DAGs).
The DAGs allow efficient counting, enumeration, random sampling, and optimization.
Our algorithms rely on Wettstein's framework to compile several crossing-free geometric graphs.
One of the remarkable contributions of Wettstein is to allow dealing with geometric graphs with ``connectivity'',
since it is known to be difficult to efficiently represent geometric graphs with such \emph{global} property.
To achieve this, Wettstein proposed specialized techniques for crossing-free spanning trees and crossing-free spanning cycles and
invented compiling algorithms running in $\ostar{7.044^n}$ time and $\ostar{5.619^n}$ time, respectively.

Our first contribution is to propose a technique to deal with the connectivity constraint more simply and efficiently.
It makes the design and analysis of algorithms easier, and yields improved time complexity.
Our algorithms achieve $\ostar{6^n}$ time and $\ostar{4^n}$ time
for compiling crossing-free spanning trees and crossing-free spanning cycles, respectively.
As the second contribution, we propose an algorithm to optimize the area surrounded by crossing-free spanning cycles.
To achieve this, we modify the DAG so that it has additional information.
Our algorithm runs in $\ostar{4.829^n}$ time to find an area-minimized (or maximized) crossing-free spanning cycle of a given point set.
Although the problem was shown to be NP-complete in 2000, as far as we know, there were no known algorithms faster than the obvious $\ostar{n!}$ time algorithm for 20 years.

\end{abstract}

\section{Introduction}
Let $P$ be a set of $n$ points in the plane.
We assume $P$ to be in general position, that is, no three points in $P$ are colinear.
A \emph{crossing-free geometric graph} on $P$ is a graph induced by the set of segments such that their endpoints are in $P$, and every two of them do not share their internal points.
In this paper, we are interested in \emph{connected} crossing-free geometric graphs, especially, \emph{crossing-free spanning trees} and \emph{crossing-free spanning cycles}.\footnote{Crossing-free spanning cycles are also called Hamiltonian cycles, spanning cycles, and planar traveling salesman tours.}
\cref{fig:cfg} shows examples of these geometric graphs.

\begin{figure}[t]
    \centering
    \begin{subfigure}{.49\linewidth}
        \centering
        \includegraphics[scale=0.4]{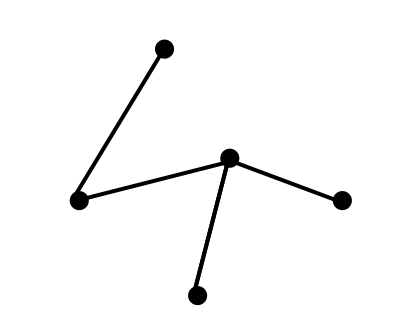}
        \caption{A crossing-free spanning tree.}
        \label{fig:tree}
    \end{subfigure}
    \begin{subfigure}{.49\linewidth}
        \centering
        \includegraphics[scale=0.4]{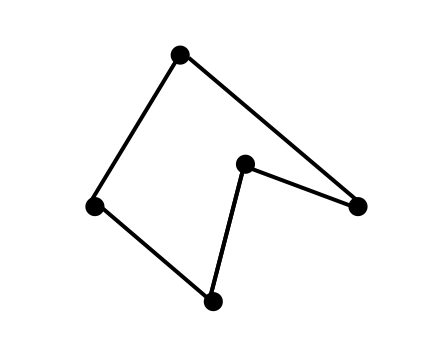}
        \caption{A crossing-free spanning cycle.}
        \label{fig:cycle}
    \end{subfigure}
    \caption{Examples of connected crossing-free geometric graphs.}
    \label{fig:cfg}
\end{figure}

We define $\mathrm{st}(P)$ and $\mathrm{sc}(P)$ as the numbers of crossing-free spanning trees and crossing-free spanning cycles on $P$, respectively.
One of the main research topics is to investigate the upper and lower bounds to $\mathrm{st}(P)$ and $\mathrm{sc}(P)$.
\cref{tab:bounds} summarizes the current best bounds.
For example, the top-left entry says that $\mathrm{st}(P) = \ostar{141.07^n}$ holds for all sets $P$ with $n$ points, see~\cite{HSSSTW13,SS11} (the $*$ indicates that any subexponential factors are ignored).
The up-to-date list of bounds for several crossing-free geometric graphs is available in~\cite{shefferWeb}.
% history for spanning cycle~\cite{demaineWeb}

% Crossing-free spanning trees and crossing-free cycles are also studied from an algorithmic point of view.
% For crossing-free spanning trees, polynomial-delay enumeration algorithms are known~\cite{AAHV07,AF96,KT09}.
% In contrast, for crossing-free spanning cycles, polynomial-delay algorithms are not known. % Yamanaka et al.?
% There exists ones for restricted cases such as $x$-monotone polygons~\cite{ZSSM96} and star-shaped polygons~\cite{Soh99}.
% There exists a polynomial-delay algorithm to enumerate surrounding polygons~\cite{YHOUY19}, which are generalizations of crossing-free spanning cycles.

Counting of crossing-free geometric graphs is also studied from an algorithmic point of view.
Although there are several problem-specific algorithms~\cite{AAHV07,AF96,KT09,Soh99,YHOUY19,ZSSM96}, there also exist general frameworks that can be applied to various crossing-free geometric graphs.
The first one is based on onion layer structures~\cite{ABCR15}, which runs in $n^{\bigO{k}}$ time, where $k$ is the number of onion layers.
Wettstein~\cite{Wet17} proposed a framework of algorithms which run in $\ostar{c^n}$ time for some constant $c$.
The framework allows efficient counting, enumeration, uniform random sampling, and optimization.
Currently, the fastest counting algorithm is presented by Marx and Miltzow~\cite{MM16} and runs in $n^{\mathrm{O}(\sqrt{n})}$ time.
However, it is not explicitly shown that their algorithm leads to efficient enumeration, uniform random sampling, or optimization.
In addition, their full paper consists of 47 pages with elaborate analysis~\cite{MM16full}.
In this paper, we are interested in designing simple and fast algorithms that can be applied to several purposes, not only counting, and thus we focus on Wettstein's framework.

% It is unsolved whether
% Counting triangulations of a simple polygon (with holes) is \#P-complete.~\cite{Epp19}

% Dealing with connectivity is challenging because it is a \emph{global} property.
% To deal with connectivity, Wettstein used states of points and direction of faces for spanning cycles.
% For spanning trees, he used them and the direction of segments.
% It makes the algorithms complicated, and applications were limited to spanning trees and spanning cycles.
% The algorithms run in $\ostar{7.125^n}$ time for crossing-free spanning trees and $\ostar{5.619^n}$ time for crossing-free spanning cycles.
% As for spanning tree, whether or not the base can be reduced to less than $6.75$ is an open problem, as stated in his paper.

% As for spanning cycles, another criteria is the enclosed area.
% Given a set $P$ of points in the plane, among all the spanning cycles, it is NP-complete to find one with minimum or maximal enclosed area.~\cite{Fek00}
% The problem is shown to be NP-complete in 2000~\cite{Fek00}.
% Nevertheless, as far as we know, there is no known exact algorithm for this problem that is faster than the \naive\ algorithm running in $\ostar{n!}$.

An overview of Wettstein's framework is as follows.
First, we \emph{compile} crossing-free geometric graphs into a directed acyclic graph (DAG).
Second, using the DAG, we can efficiently perform counting, enumeration, uniform random sampling, and optimization~\cite{AS13}.
One of the remarkable contributions of Wettstein is to allow dealing with geometric graphs with ``connectivity'',
since it is known to be difficult to efficiently represent geometric graphs with such \emph{global} property.
To achieve this, Wettstein proposed specialized techniques for spanning trees and spanning cycles and invented algorithms run in $\ostar{7.044^n}$ time and $\ostar{5.619^n}$ time, respectively.
Since it is known that there exists a point set with $\Omega^{*}(12.52^n)$ crossing-free spanning trees~\cite{HM15}, for such cases, the algorithm can compile all crossing-free spanning trees exponentially faster than explicitly enumerating them.
However, it is unclear that the compilation algorithm is \emph{always} exponentially faster than explicit enumeration because the lower bound for any point set $P$ is $\Omega^{*}(6.75^n)$~\cite{FN99}.
It was left as future work in Wettstein's paper whether or not we can reduce the base of the time complexity of the compilation algorithm to less than $6.75$.
For crossing-free spanning cycles, we cannot hope for such an exponential speed-up by compilation because $\mathrm{sc}(P) = 1$ holds for a set $P$ of $n$ points in convex position.
Although a point set with $\Omega^{*}(4.64^n)$ crossing-free spanning cycles is known~\cite{GNT00}, the base of the number is less than $5.619$, which is the base of the running time of Wettstein's compilation algorithm for crossing-free spanning cycles.
Therefore,
%
%it is unclear that there exists
%
we have the following natural question:
does there exist a point set where the algorithm can compile all crossing-free spanning cycles exponentially faster than explicit enumeration?

\begin{table}[t]
    \centering
    \caption{Bounds for the numbers of connected crossing-free geometric graphs.}
    \begin{tabular}{lrrrr}
         & \multicolumn{2}{c}{$\mathrm{st}(P)$} & \multicolumn{2}{c}{$\mathrm{sc}(P)$} \\ \hline
        $\forall P: \ostar{c^n}$ & 141.07 & \cite{HSSSTW13,SS11} & 54.55 & \cite{SSW13} \\
        $\exists P: \Omega^{*}(c^n)$ & 12.52 & \cite{HM15} & 4.64 & \cite{GNT00} \\
        $\forall P: \Omega^{*}(c^n)$ & 6.75 & \cite{FN99} & 1.00 &
    \end{tabular}
    \label{tab:bounds}
\end{table}
% \begin{table}[]
%     \centering
%     \caption{Bounds for the numbers of connected crossing-free geometric graphs.}
%     \begin{tabular}{lrrrrrrrr}
%          & \multicolumn{2}{c}{$\mathrm{cg}(P)$} & \multicolumn{2}{c}{$\mathrm{sf}(P)$} & \multicolumn{2}{c}{$\mathrm{st}(P)$} & \multicolumn{2}{c}{$\mathrm{sc}(P)$} \\ \hline
%         $\forall P: \ostar{c^n}$ & 186.46 & \cite{SS13} & 160.55 & \cite{HSSSTW13,SS11} & 141.07 & \cite{HSSSTW13,SS11} & 54.55 & \cite{SSW13} \\
%         $\exists P: \Omega^{*}(c^n)$ & 35.49 & \cite{AHHHKV07} & 13.61 & \cite{HM15} & 12.52 & \cite{HM15} & 4.64 & \cite{GNT00} \\
%         $\forall P: \Omega^{*}(c^n)$ & xx & yy & xx & yy & 6.75 & \cite{FN99} & 1.00 &
%     \end{tabular}
%     \label{tab:bounds}
% \end{table}

%
%Although Wettstein's framework is elegant in its generality,
%there remains room for improvement in compilation algorithms for geometric graphs with connectivity constraints.
% - イントロ，第3段落目: この段落の最後に，「Westten の手法は，汎用的ですばらしいけど，
% connectivity に関して改良の余地が残されている」みたいな一文を入れて，次の段落への繋がりをつけたい．

%
%In this paper, we describe two contributions.
%
Our contribution includes answers to the above two open questions.
Moreover, we show that our technique can be applied for solving optimization problems.
Wettstein's framework can be applied to various geometric objects. In this paper, we first focus on refining the framework to answer the questions, and next show that the framework can be used for optimizations.
Our compilation algorithms are based on Wettstein's framework, and the constructed DAG by our algorithm can be used for efficient counting, enumeration, random sampling, and optimization.
Now, we describe the detail of our contributions below.
First, we propose a technique to deal with the connectivity constraint more simply and efficiently.
It makes the design and analysis of algorithms easier, and yields improved time complexity for compilation algorithms.
% % yamanaka: アルゴリズムが，最初に compile することを明示しました．
% We first present new constructions of DAGs
% for crossing-free spanning trees and crossing-free spanning cycles.
% Then, we show that, using DAGs, one can perform counting and solve an optimization problem efficiently.
% nakahata: 我々が改善したのはDAG構築パートだけなので，この言い方だとちょっと語弊があるかもしれません．
% 代わりに，我々のアルゴリズムは compilation algorithm であるという点を強調しました．
% yamanaka: 承知しました！ありがとうございます．
Our algorithm can compile all crossing-free spanning trees in $\ostar{6^n}$ time and all crossing-free spanning cycles in $\ostar{4^n}$ time.
Since $\mathrm{st}(P) = \Omega^{*}(6.75^n)$ holds for any point set $P$~\cite{FN99}, our compilation algorithm for crossing-free spanning trees is \emph{always} exponentially faster than explicit enumeration.
For crossing-free spanning cycles, recall that we cannot hope for such an exponential speed-up because there exists a point set $P$ with $\mathrm{sc}(P) = 1$.
However, since there exists a point set $P$ with $\mathrm{st}(P) = \Omega^{*}(4.64^n)$~\cite{GNT00}, for such cases, our compilation algorithm for crossing-free spanning cycles is guaranteed to run exponentially faster than explicit enumeration.

Next, we propose an algorithm to optimize the area surrounded by spanning cycles using a DAG.
To achieve this, we modify the DAG so that it has additional information.
Our algorithm runs in $\ostar{4.829^n}$ time to find an area-minimized (or maximized) spanning cycle of a given point set.
Although the problem was shown to be NP-complete in 2000~\cite{Fek00}, as far as we know, there were no known algorithms faster than the obvious $\ostar{n!}$ time algorithm for 20 years.
To the best of our knowledge, our algorithm is the first such one.
%However, note that it is still open whether there exists a point set where the algorithm runs exponentially faster than explicit enumeration because the current best lower bound to the maximum value of $\mathrm{sc}(P)$ is $\Omega^{*}(4.64^n)$~\cite{GNT00}.

% Due to the space limitation, we show the proofs of the lemmas and theorems marked with * in the appendix.
In the following sections, we show the proofs of the lemmas and theorems marked with * in the appendix.

\section{Overview of Wettstein's framework}
In this section, we review Wettstein's framework.
Let $P$ be a set of $n$ points in the plane in general position, that is, no three points in $P$ are colinear.
Let $\sgm$ be the set of segments (line segments) whose endpoints are in $P$.
We assume that no two points have the same $x$-coordinate.
With this assumption, the points can be uniquely ordered as $p_1, \dots, p_n$ from left to right.
If $i \le j$ (or $i < j$), we write $p_i \preceq p_j$ (or $p_i \prec p_j$).
Two different segments $s_1$ and $s_2$ are \emph{non-crossing} if they do not share their internal points.
The set $C$ ($\subseteq \sgm$), called a \emph{combination} of $\sgm$, is \emph{crossing-free} if every two different segments in $C$ are non-crossing.

The basic idea of the framework is to represent a geometric graph as a combination of \emph{units}.
As units, we intensively consider segments, although Wettstein considered several units (e.g., triangles for triangulations).
Both crossing-free spanning trees and crossing-free spanning cycles can be expressed by the
%
%combinations
%
sets of their $n - 1$ and $n$ segments, respectively.

To represent a set of geometric graphs, we define a special DAG.
\begin{definition}\label{def:combination_graph}
    A \emph{combination graph} is a directed and acyclic multigraph $\Gamma$ with two distinguished vertices $\bot$ and $\top$, called the \emph{source} and \emph{sink} of $\Gamma$. All edges in $\Gamma$, except for those ending in $\top$, are labeled with a segment in $\sgm$.
    Moreover, the sink $\top$ has no outgoing edges.
    The \emph{size} $\size{\Gamma}$ of $\Gamma$ is the number of vertices and edges in $\Gamma$.
\end{definition}

In a combination graph, there is a one-to-one correspondence between a $\bot$-$\top$ path with a combination of segments.
In other words, a $\bot$-$\top$ path represents a combination of segments that is the set of labels of edges appearing in the path.
Therefore, using a combination graph, we can represent a set of geometric graphs.

\cref{fig:combination_graph} shows an example.
\cref{fig:cycles} shows the set of three crossing-free spanning cycles on the same point set.
\cref{fig:dag} is a combination graph representing the set of crossing-free spanning cycles.
In the figure, $s_{ij}$ denote the segment whose endpoints are $p_i$ and $p_j$.
Each alphabet in a circle is the name of the vertex.
There are three $\bot$-$\top$ paths: $\bot$-A-C-F-H-$\top$, $\bot$-B-D-F-H-$\top$, and $\bot$-B-E-G-I-$\top$.
The paths correspond to the crossing-free spanning cycles in \cref{fig:cycles} from left to right.

\begin{figure}[t]
    \centering
    \begin{subfigure}{.49\linewidth}
        \centering
        \includegraphics[scale=0.4]{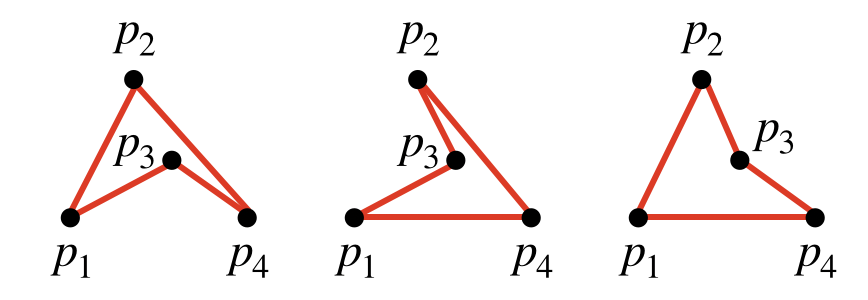}
        \caption{Crossing-free spanning cycles on the same point set.}
        \label{fig:cycles}
    \end{subfigure}
    \begin{subfigure}{.49\linewidth}
        \centering
        \includegraphics[scale=0.4]{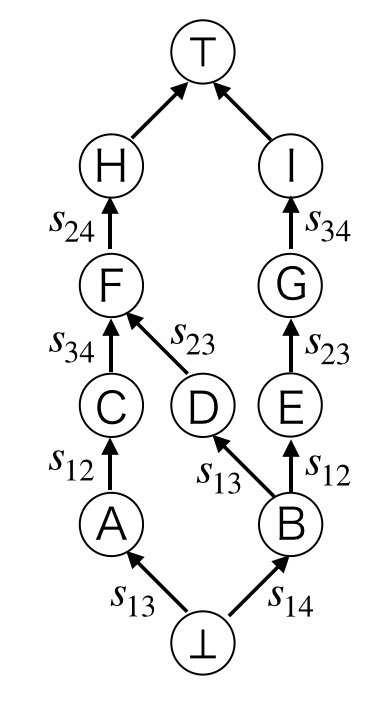}
        \caption{A combination graph representing the set of crossing-free spanning cycles.}
        \label{fig:dag}
    \end{subfigure}
    \caption{Examples of crossing-free spanning cycles and a corresponding combination graph.}
    \label{fig:combination_graph}
\end{figure}

Once we compile geometric graphs into a combination graph $\Gamma$, we can use $\Gamma$ for efficient counting, enumeration, random sampling, and optimization of ``decomposable'' function~\cite{AS13}.
One example of ``decomposable'' functions is the sum of lengths of segments in a combination.
More generally, we can optimize a linear function of $\sgm$.
Given a cost function $\func{c}{\sgm}{\mathbb{R}}$, we call a function $\func{f}{2^{\sgm}}{\mathbb{R}}$ a \emph{linear function} if $f$ is in the form $f(C) = \sum_{s \in C} c(s)$ for $C \subseteq \sgm$.
We summarize the uses of $\Gamma$ in the next lemma.
In the following lemma, \emph{solutions} mean the geometric graphs represented by a combination graph.
In fact, our time bound for random sampling improves the previous time bound appeared in \cite{AS13}.
We show details in \cref{app:query}.

\begin{lemma}[*]\label{lem:query}
    Let $\Gamma$ be a combination graph
    (whose edge labels are in $\mathcal{S}_P$) and $h$ be the \emph{height} of $\Gamma$, that is, the maximum number of edges contained in $\bot$-$\top$ paths. Then, we can
    \begin{itemize}
        \item count the number of solutions in $\bigO{\size{\Gamma}}$ time,
        \item enumerate solutions in $\bigO{h}$ time per solutions,
        \item randomly sample a solution in $\bigO{h \log n}$ time\footnote{Using a technique in \cite{AS13} yields time bound $\bigO{hn^2}$.
        However, we can reduce the time to $\bigO{h \log n}$. See \cref{app:query} for details.}, and
        \item find a solution minimizing (or maximizing) a given linear function of $\sgm$ in $\bigO{\size{\Gamma}}$ time.
    \end{itemize}
\end{lemma}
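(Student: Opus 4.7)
The plan is to reduce all four operations to dynamic programs over $\Gamma$ processed in reverse topological order. For \emph{counting}, define $\cnt{v}$ to be the number of paths from $v$ to $\top$, with base case $\cnt{\top} = 1$ and recurrence $\cnt{v} = \sum_{(v,u) \in E(\Gamma)} \cnt{u}$; a single sweep computes $\cnt{\bot}$ in $\bigO{\size{\Gamma}}$ time. For \emph{optimization} of a linear function $f(C) = \sum_{s \in C} c(s)$, I would follow the same template, replacing the sum with $\min$ (or $\max$) and adding $c(s)$ whenever the traversed edge is labeled by $s$; storing an optimal-successor pointer per vertex lets me reconstruct an optimal $\bot$-$\top$ path, again in $\bigO{\size{\Gamma}}$ time.

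For \emph{enumeration}, I plan a DFS from $\bot$ that emits each $\bot$-$\top$ path when it reaches $\top$ and then backtracks. Since every reachable vertex has $\cnt{v} \ge 1$, the DFS never descends into a dead end. Each path has at most $h$ edges, and between two consecutive outputs only the edges differing from the previous path are freshly traversed, so the $\bigO{h}$ delay per solution follows by standard amortization over the DFS stack.

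The real work is the \emph{random sampling} bound $\bigO{h \log n}$. The natural procedure descends from $\bot$, at each vertex $v$ choosing outgoing edge $(v,u)$ with probability $\cnt{u}/\cnt{v}$; by telescoping, this picks each $\bot$-$\top$ path with probability $1/\cnt{\bot}$. A direct adaptation of the method in \cite{AS13} linearly scans a vertex's outgoing edges to locate the one containing a uniform random integer in $[0, \cnt{v})$; since $\size{\sgm} = \Theta(n^2)$, this costs $\bigO{n^2}$ per step and $\bigO{h n^2}$ per sample. Instead, I would precompute, as a by-product of the counting DP, a sorted array of cumulative counts of the outgoing edges at each vertex, and then at sampling time locate the chosen edge by binary search in $\bigO{\log n}$ comparisons, yielding the claimed $\bigO{h \log n}$ bound.

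The main obstacle is pinning down the arithmetic model: the counts $\cnt{v}$ can be exponential in $n$, so treating their comparisons and subtractions as $\bigO{1}$ (as \cite{AS13} does) is what makes the binary search actually take $\bigO{\log n}$ per step rather than $\bigO{n \log n}$ in the bit model. Once this convention is fixed, checking that the sampling distribution is uniform and that the cumulative arrays still fit within $\bigO{\size{\Gamma}}$ extra space during the counting DP is routine.
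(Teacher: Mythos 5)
Your proposal is correct and follows essentially the same route as the paper's proof in Appendix~\ref{app:query}: a reverse-topological counting DP, DFS for enumeration, the same DP with $\min$/$\max$ for linear optimization, and random sampling via precomputed cumulative sums of descendant counts located by binary search to get $\bigO{h \log n}$. Your remark about the unit-cost arithmetic model for the exponentially large counts is a reasonable clarification that the paper leaves implicit.
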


From now on, we describe how to construct a combination graph efficiently.
For a segment $s \in \sgm$, $\pts{s}$ denotes the set of endpoints of $s$.
We define $\lft{s}$ and $\rgt{s}$ as the left and right endpoint of $s$, respectively.
In other words, if $\pts{s} = \{p_i, p_j\}$ and $i < j$, then $\lft{s} = p_i$ and $\rgt{s} = p_j$.
For two segments $s_1$ and $s_2$, if $\rgt{s_1} \preceq \lft{s_2}$, then we write $s_1 \preceq s_2$.
For each $s \in \sgm$, we define $\low{s} \subseteq P$ and $\upp{s} \subseteq P$, the \emph{lower} and \emph{upper shadow} of $s$, respectively.
The set $\low{s}$ contains all points in $P$ from which a vertical ray shooting upwards intersects the relative interior of $s$.
The set $\upp{s}$ is defined analogously.
Whenever we have $\pts{s_1} \cap \low{s_2} \neq \emptyset$ or $\upp{s_1} \cap \pts{s_2} \neq \emptyset$ for any $s_1, s_2 \in \sgm$, then we say that $s_2$ \emph{depends on} $s_1$ and we write $s_1 \sqsubset s_2$.
For $C \subseteq \sgm$, $\pts{C}$ and $\low{C}$ denote the sets $\bigcup_{s \in C} \pts{s}$ and $\bigcup_{s \in C} \low{s}$, respectively.

Assume $C \subseteq \sgm$ holds.
Then, a segment $s \in C$ is
\emph{extreme} (\emph{in $C$}) if $s \not\depend s'$ holds for all $s' \in C \setminus \{s\}$.
If it exists, the \emph{right-most extreme element} in $C$ is the unique extreme element $s$ in $C$ such that $s' \preceq s$ for all extreme elements $s' \in C \setminus \{s\}$.

\begin{definition}\label{def:serializable}
    Let $\mf{C}$ be a set of combinations of $\sgm$.
    We call $\mf{C}$ \emph{serializable} if
    %
    %it
    %
    $\mf{C}$ is non-empty and if every non-empty $C \in \mf{C}$ contains a right-most extreme element,
    %
    %denoted by $\mathrm{rex}(C)$.
    %Additionally, $C \setminus \{\mathrm{rex}(C)\}$ must itself be an element of $\mf{C}$.
    %
    denoted by $\mathrm{rex}(C)$, and $C \setminus \{\mathrm{rex}(C)\}$ is an element of $\mf{C}$.
\end{definition}

% TODO: ここらへん書き写しただけなので表現を変える
Let $\mf{C}$ be a serializable set of combinations of $\sgm$.
For $C, C' \in \mf{C}$ and $s \in \sgm$, we write $C \xrightarrow{s} C'$ if $C = C' \setminus \{s\}$ and $s = \mathrm{rex}(C')$ hold.
Observe that $\mf{C}$ naturally induces a DAG,
%
%especially, a tree.
%
which is almost a tree, as follows.
The graph has the vertex set $\mf{C}$ and the directed edges with labels from $\sgm$.
Whenever $C \xrightarrow{s} C'$ holds, we add an edge from vertex $C$ to vertex $C'$ with label $s$.
A combination graph representing an arbitrary subset of $\mf{C}$ is obtained by defining $\bot := \emptyset$ and by adding appropriate unlabeled edges pointing at an additional vertex $\top$.
However, the resulting combination graph is useless because its size is $\Theta(\size{\mf{C}})$.
To make the combination graph smaller, we define an equivalence relation among $\mf{C}$ and merge equivalent combinations.

\begin{definition}
    Let $\mf{C}$ be a serializable set of combinations of $\sgm$.
    An equivalence relation $\sim$ on $\mf{C}$ is
    %
    %\emph{coherent} if $C_1 \sim C_2$ implies the following.
    %
    \emph{coherent} if, for any $C_1, C_2 \in \mf{C}$ with $C_1 \sim C_2$, $C_1 \xrightarrow{s} C'_1$ impiles that $C_2 \xrightarrow{s} C'_2$ for some $C'_1 \sim C'_2$.
    In addition, if $C \not\sim C'$ holds for any $C, C' \in \mf{C}$ and $s \in \sgm$ satisfying
    $C \xrightarrow{s} C'$, we say that $\mf{C}$ is \emph{progressive on} $\sim$.
\end{definition}

For any $C \in \mf{C}$, we define the equivalence class $[C] := \inset{C' \in \mf{C}}{C' \sim C}$, where the relation $\sim$ will be obvious from the context.
We also define the set $(\mf{C}/{\sim}) := \inset{[C]}{C \in \mf{C}}$ of all equivalence classes.

% Intuitively, to make our combination graph smaller, we would like to merge two vertices $C_1$ and $C_2$.
% This makes sense only if the subtrees rooted at $C_1$ and $C_2$ are identical when looking at edge labels.
% As it turns out, if $C_1 \sim C_2$ holds, coherency enforces precisely what we want.
If an equivalence relation $\sim$ on $\mf{C}$ is coherent, we can safely merge two vertices $C_1, C_2 \in \mf{C}$ such that $C_1 \sim C_2$.
In addition, progressiveness requires there are no loops in a combination graph.
Therefore, when $\mf{C}$ is a serializable set of combinations of $\sgm$ and $\sim$ is a coherent equivalence relation on $\mf{C}$ such that $\mf{C}$ is progressive on $\sim$, by merging equivalent vertices with respect to $\sim$, we obtain a DAG whose vertices correspond to equivalence classes.
For any subset $\mf{T}$ of equivalence classes $(\mf{C} / {\sim})$, we obtain a combination graph $\Gamma$ representing $\mf{T}$ by adding unlabeled edges from every vertex $[C] \in \mf{T}$ to $\top$.
The number of vertices in $\Gamma$ is $\size{(\mf{C} / {\sim})}$ and each vertex has at most $\size{\sgm} = \bigO{n^2}$ edges.
In summary, the following lemma holds.

\begin{lemma}[Lemma~2 in~\cite{Wet17}]\label{lem:size}
    Let $\mf{C}$ be a serializable set of combinations of $\sgm$, $\sim$ be a coherent equivalence relation such that $\mf{C}$ is progressive on $\sim$, and $\mf{T}$ be a subset of $(\mf{C}/{\sim})$.
    Then, there exists a combination graph $\Gamma$ with size $\bigO{\size{(\mf{C}/{\sim})} \cdot n^2}$ that represents $\bigcup_{[C] \in \mf{T}} [C]$.
\end{lemma}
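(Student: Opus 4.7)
The plan is to construct $\Gamma$ explicitly as the natural quotient of the canonical DAG on $\mf{C}$ by $\sim$, augmented with a sink. I would take the vertex set to be $(\mf{C}/{\sim}) \cup \{\top\}$, designate $\bot := [\emptyset]$ as the source (well defined since serializability of $\mf{C}$ guarantees $\emptyset \in \mf{C}$ by iteratively stripping right-most extreme elements), and, for every transition $C \xrightarrow{s} C'$ in $\mf{C}$, install a directed edge labeled $s$ from $[C]$ to $[C']$. Coherence of $\sim$ is exactly what makes this construction independent of the chosen representative: if $C^\star \sim C$ then some $C^{\star\prime} \sim C'$ satisfies $C^\star \xrightarrow{s} C^{\star\prime}$, so the target class is intrinsic to $[C]$ and the label $s$. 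Finally, I would add a single unlabeled edge from every $[C] \in \mf{T}$ to $\top$, producing a combination graph in the sense of \cref{def:combination_graph}.

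Next, I would verify acyclicity. A short argument using coherence shows that $\size{C}$ is constant on every equivalence class of $\sim$: starting from $C_1 \sim C_2$ and the strictly decreasing chain $C_1 \xrightarrow{s_1} \cdots \xrightarrow{s_k} \emptyset$ guaranteed by serializability, coherence lifts it step by step to a parallel chain from $C_2$, forcing $\size{C_2} \geq \size{C_1}$; symmetry gives equality. Since each labeled edge in $\Gamma$ strictly decreases this well-defined class cardinality, $\Gamma$ is acyclic, and progressiveness is reflected in the absence of self-loops.

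The main technical step is to exhibit a bijection between $\bot$-$\top$ paths in $\Gamma$ and elements of $\bigcup_{[C] \in \mf{T}} [C]$. In the forward direction, any $C$ in this union has a canonical serialization $\emptyset = C^{(0)} \xrightarrow{s_1} C^{(1)} \xrightarrow{s_2} \cdots \xrightarrow{s_k} C^{(k)} = C$ with $s_i = \rex{C^{(i)}}$; projecting to equivalence classes yields a path in $\Gamma$ ending at $[C] \in \mf{T}$, which extends to $\top$ via the unlabeled edge. Conversely, a $\bot$-$\top$ path with label sequence $s_1, \dots, s_k$ is decoded by sequentially setting $C^{(i)} := C^{(i-1)} \cup \{s_i\}$, and one checks by induction, using coherence to lift the edge $[C^{(i-1)}] \xrightarrow{s_i} V_i$ to an actual transition from a representative equivalent to $C^{(i-1)}$, that $V_i = [C^{(i)}]$ and that $C^{(i)} \in \mf{C}$. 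The main obstacle I anticipate is precisely this lifting argument --- coherence is an existential statement, so a priori different lifts might yield different intermediate combinations, but the identity $C^{(i)} = C^{(i-1)} \cup \{s_i\}$ together with the requirement that the lifted transition have $s_i = \rex{C^{(i)}}$ pins $C^{(i)}$ down uniquely and resolves the ambiguity. Once the bijection is established, the size bound is immediate: $\Gamma$ has $\size{(\mf{C}/{\sim})} + 1$ vertices and each has at most $\size{\sgm} + 1 = \bigO{n^2}$ outgoing edges, so $\size{\Gamma} = \bigO{\size{(\mf{C}/{\sim})} \cdot n^2}$.
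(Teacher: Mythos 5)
Your overall construction---quotient the canonical DAG on $\mf{C}$ by $\sim$, add unlabeled edges from the classes in $\mf{T}$ to $\top$, and count $\size{(\mf{C}/{\sim})}$ vertices with $\bigO{n^2}$ edges each---is exactly the construction the paper sketches in the paragraph preceding the lemma (the statement itself is imported from Wettstein), and your well-definedness-of-edges argument, path/combination bijection, and size count are all sound.

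The one step that fails is your acyclicity argument. Coherence only lifts transitions \emph{forward} out of a pair of equivalent combinations: from $C_1 \sim C_2$ and $C_1 \xrightarrow{s} C_1'$ you get $C_2 \xrightarrow{s} C_2'$. The serialization chain $\emptyset \xrightarrow{} \cdots \xrightarrow{} C_1$ guaranteed by serializability has its transitions pointing \emph{into} $C_1$, and its intermediate combinations are not known to be equivalent to anything on the $C_2$ side, so there is nothing to lift and the chain cannot be transported to one ending at $C_2$. Indeed, $\size{C}$ need not be constant on equivalence classes: take two disjoint segments $a \preceq b$ with no shadow dependencies, let $\mf{C} = \set{\emptyset, \set{a}, \set{b}, \set{a,b}}$ (so $\rex{\set{a,b}} = b$), and declare $\set{b} \sim \set{a,b}$ with all other classes singletons. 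Neither $\set{b}$ nor $\set{a,b}$ admits an outgoing transition, so $\sim$ is vacuously coherent and $\mf{C}$ is progressive on it, yet the two members of that class have different cardinalities. (A smaller slip: labeled edges of $\Gamma$ go from $[C]$ to $[C \cup \set{s}]$, so they \emph{increase} cardinality rather than decrease it.) Acyclicity is nevertheless true, but needs a different argument: if $\Gamma$ contained a directed cycle of labeled edges with labels $s_1, \dots, s_m$, then starting from any representative $F_0$ of a class on the cycle and repeatedly applying coherence to follow those labels, each full traversal returns to the same class while strictly increasing the cardinality of the current representative by $m$; iterating produces members of $\mf{C}$ of unbounded size, contradicting $\mf{C} \subseteq 2^{\sgm}$ being finite. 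With that repair, the rest of your proof goes through.
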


To obtain a time bound to construct $\Gamma$, we add another factor to check a given combination is in $\mf{C}$.
As we will see in the later sections, it can be done in $\bigO{n}$ time for all problems discussed in this paper.

\section{Algorithms for connected crossing-free geometric graphs}\label{sec:connected}
\subsection{Crossing-free spanning trees}\label{sec:tree}
In this subsection, we propose an algorithm to compile crossing-free spanning trees.
% 連結性以外は既存手法，連結性は本研究で新しく提案
By definition, $C \subseteq \sgm$ is a crossing-free spanning tree if and only if 1) $C$ is crossing-free, 2) $C$ is cycle-free, and 3) all the points are connected in $C$.
To ensure the first condition, we use Wettstein's algorithm to compile all crossing-free geometric graphs.
As for the second and the third condition, it suffices to maintain the connectivity of points in $C$.
To deal with the connectivity efficiently, we propose a new simple and efficient technique, which leads to an improved complexity.

First, we introduce Wettstein's algorithm to compile all crossing-free geometric graphs.
In the following, $\cfg$ denotes the set of crossing-free combinations of $\sgm$.
Note that, if $C \in \cfg$ is non-empty, then any subset of $C$ is in $\cfg$.
This property does not hold for crossing-free spanning trees and crossing-free spanning cycles.

Let $C \in \cfg$.
As in \cite{Wet17}, we partition $P$ into three sets $W(C)$, $G(C)$, and $B(C)$. (Each symbol stands for white, gray, and black.)
The sets are defined by $B(C) := \low{C}$, $G(C) := \pts{C} \setminus \low{C}$, and $W(C) := P \setminus \pts{C}$.
If there is no ambiguity, we omit $C$ and denote $W$, $G$, and $B$.
Note that $G$ is non-empty if $C$ is non-empty.
One point in $G$ is marked as $m(C)$ such that $m(C)$ is the left point of $\rex{C}$.
If $C = \emptyset$, we set $m(\emptyset) = \mathtt{nil}$.
We define $\tau(C) := (W, G, B, m)$ and the equivalence relation $\sim_{\tau}$ on $\cfg$ such that $C_1 \sim_{\tau} C_2$ if and only if $\tau(C_1) = \tau(C_2)$.
\begin{lemma}[Lemma~4 in \cite{Wet17}]\label{lem:cf_serialize}
    For any point set $P$, the set $\cfg$ is serializable.
\end{lemma}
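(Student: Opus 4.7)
The plan is to verify the two requirements of Definition~\ref{def:serializable} for $\cfg$. Non-emptiness is trivial since $\emptyset$ is vacuously crossing-free. Closure under removal is also immediate: any subset of a crossing-free combination is crossing-free, so $C \setminus \{\rex{C}\} \in \cfg$ holds as soon as $\rex{C}$ is identified. The substantive step is therefore to show that every non-empty $C \in \cfg$ contains a \emph{unique} right-most extreme element.

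First, I would argue that the dependency relation $\depend$ restricted to $C$ is acyclic, so that at least one extreme element exists. The intuition is geometric: $s_1 \depend s_2$ forces $s_2$ to lie locally ``above'' $s_1$, and unpacking the definitions of $\low{\cdot}$ and $\upp{\cdot}$ shows that any $\depend$-cycle among segments of $C$ would compel two of them to cross, contradicting crossing-freeness. Hence $C$ has a $\depend$-maximal element, which is by definition an extreme element.

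Next, I would show that the extreme elements of $C$ are pairwise $\preceq$-comparable, so that they form a chain whose $\preceq$-maximum may be taken as $\rex{C}$. The key claim is that if $s, s' \in C$ are both extreme and their $x$-projections overlap in an interval of positive length, then crossing-freeness forces one of them to lie strictly above the other throughout the overlap, and consequently an endpoint of the lower segment falls in the lower shadow of the upper one (or, via the dual clause, an endpoint of the upper one in the upper shadow of the lower one)---which would make the lower segment non-extreme, a contradiction. Since the assumption that no two points share an $x$-coordinate rules out any other form of $x$-overlap except a shared endpoint, it follows that for any two extreme $s, s'$ we have $\rgt{s} \preceq \lft{s'}$ or $\rgt{s'} \preceq \lft{s}$, which yields a total $\preceq$-order on the extreme elements and thus a unique maximum.

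The main obstacle will be this uniqueness step: one must simultaneously track both disjunctive clauses of $\depend$ (the one involving $\low{\cdot}$ and the one involving $\upp{\cdot}$) across the possible overlap configurations, and invoke the general-position hypothesis to preclude degenerate collinearities and tangential incidences with segment endpoints. Existence of an extreme element is comparatively routine once the acyclicity of $\depend$ on $C$ is in hand.
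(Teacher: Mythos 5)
The paper gives no proof of this lemma---it is imported verbatim as Lemma~4 of Wettstein~\cite{Wet17}---but your plan is the standard argument and is essentially correct: $\emptyset\in\cfg$ gives non-emptiness, downward closure of crossing-freeness is immediate, and the comparability analysis you outline for two extreme segments with $x$-projections overlapping in positive length does go through (the lower of the two either has an endpoint in the upper one's lower shadow or has an endpoint of the upper one in its own upper shadow, so it cannot be extreme, whence extreme elements are totally ordered by $\preceq$ and the right-most one is unique). The one step you call ``comparatively routine'' but leave genuinely unargued is the acyclicity of $\depend$ restricted to a crossing-free $C$: a $2$-cycle is indeed excluded just by unpacking definitions, but for cycles of length at least three you need first that $s_1\depend s_2$ forces $s_2$ to lie weakly above $s_1$ on their entire (positive-length) common $x$-range, and then an interval argument---e.g.\ taking a shortest cycle and a vertical line through the leftmost right end of the participating $x$-ranges, which is met by three consecutive segments of the cycle---so that the pointwise height comparisons either shortcut the cycle or force two non-crossing segments to coincide on a subsegment.
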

\begin{lemma}[Lemma~3 in \cite{Wet17}]\label{lem:tau}
    The equivalence relation $\sim_{\tau}$ on $\cfg$ is coherent.
    In addition, $\cfg$ is progressive on $\sim_{\tau}$.
\end{lemma}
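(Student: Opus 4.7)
The plan is to prove the two claims separately, both by inspecting how adding a segment transforms the tuple $\tau$. The guiding principle is that $\tau(C) = (W, G, B, m)$ captures the geometric information about $C$ that is needed to determine (i) which segments $s$ can be appended as a new rightmost extreme element, (ii) how the tuple updates upon such an addition, and (iii) whether the tuple actually changes. Both parts of the lemma reduce to arguments that these properties depend only on $\tau(C)$, not on the particular $C$ realizing that tuple.

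For coherence, I would fix $C_1 \sim_{\tau} C_2$ with shared tuple $(W, G, B, m)$ and suppose $C_1 \xrightarrow{s} C_1'$. My plan is to verify three things about $C_2 \cup \{s\}$: that it lies in $\cfg$, that $s$ is its rightmost extreme element, and that its tuple coincides with $\tau(C_1')$. The crossing-free check amounts to showing that $s$ shares no interior point with any $s' \in C_2$; the natural approach is to observe that a crossing between $s$ and some element of $C$ forces a point of $\pts{s}$ to lie in $\low{C}$ or in the upper shadow of some segment of $C$, which under general position is detectable from the colour classes $B$ and $G$. The extremeness condition $s \not\depend s'$ for all $s' \in C_2$ unfolds into (a) $\low{s} \cap \pts{C_2} = \emptyset$ and (b) no endpoint of $s$ lies in $\upp{s'}$ for any $s' \in C_2$; both can be phrased via the shadows of $s$ against $(G, B)$ alone. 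The rightmost-extreme condition reduces to a comparison with the previous $\rex{C_2}$, whose left endpoint is exactly $m$. Finally, the update rule $(W, G, B, m) \mapsto (W', G', B', \lft{s})$ depends only on $s$ and on the old tuple, so $\tau(C_1') = \tau(C_2 \cup \{s\})$ follows.

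For progressiveness, I would suppose $C \xrightarrow{s} C'$ and aim to show $\tau(C) \ne \tau(C')$. Since $m(C') = \lft{s}$, if $m(C) \ne \lft{s}$ we are done immediately. Otherwise $\lft{s} = m(C) = \lft{\rex{C}}$, so $s$ and $\rex{C}$ are distinct segments sharing a left endpoint; hence $\rgt{s} \ne \rgt{\rex{C}}$. A short case analysis on whether $\rgt{s}$ lies in $W$, $G$, or $B$, together with the extremeness of $s$ in $C'$, should yield a strict change in one of the colour classes. A cleaner variant is to argue directly that any segment added as an extreme element of $C' = C \cup \{s\}$ introduces at least one new point either into $B$ (via $\low{s}$, appealing to general position to exhibit a point in $P$ strictly below $s$ that was not previously covered) or into the endpoint set, forcing $\tau$ to change.

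The main obstacle I anticipate is the coherence step, specifically justifying that the crossing and extremeness tests for $s$ against $C$ depend only on $\tau(C)$ and not on the specific realization $C$; in particular, one must argue that which segments of $C$ contribute a given shadow point or endpoint is irrelevant, only the cumulative sets $\low{C}$ and $\pts{C}$ matter. This is the technical heart of the proof and relies on general position. Progressiveness is, by comparison, routine once the update rule for $\tau$ is spelled out, though the subcase $\lft{s} = m(C)$ still requires a small geometric argument.
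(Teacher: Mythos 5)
The paper never proves this statement: it is imported as Lemma~3 of Wettstein~\cite{Wet17} and used as a black box (the appendix proofs of \cref{lem:phi_coherent,lem:psi_coherent} explicitly lean on it). So there is no in-paper proof to compare against, and your plan has to be judged on its own. At the level of strategy it is the right one, and it matches the template the paper uses for the later coherence lemmas: set $C_2':=C_2\cup\{s\}$, check membership in $\cfg$, check $s=\rex{C_2'}$, and check equality of the updated tuples, using that $\low{C}=B$ and $\pts{C}=P\setminus W$ are recoverable from $\tau(C)$.

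There is, however, a concrete error at exactly the point you single out as the technical heart. For crossing-freeness you claim that a crossing between $s$ and some $s''\in C$ forces a point of $\pts{s}$ into $\low{C}$ \emph{or into the upper shadow of some segment of $C$}. The second disjunct has the shadow the wrong way round: $\pts{s}\cap\upp{s''}\neq\emptyset$ means $s''\sqsubset s$, which is perfectly compatible with $s$ being extreme (the new segment is allowed to depend on old ones) and is not recorded in $\tau$, so it yields neither a contradiction nor anything detectable from $(W,G,B)$. The correct claim --- obtained by looking at the endpoint that bounds the overlap of the two vertical slabs on the side of the crossing point where $s''$ passes above $s$ --- is that a crossing forces $\pts{s}\cap\low{s''}\neq\emptyset$ or $\upp{s}\cap\pts{s''}\neq\emptyset$, i.e.\ $s\sqsubset s''$, contradicting extremeness of $s$ in $C\cup\{s\}$; with that fix the step closes and is indeed a function of $\tau(C)$ only. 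Two further gaps: (i) the rightmost-extreme test is not a mere "comparison with $m$", since $\tau$ records only $\lft{\rex{C}}$ and not $\rgt{\rex{C}}$; one must argue geometrically that a surviving extreme element of $C_2'$ lying to the right of $s$ would force $\lft{s}$ into its open slab and hence create a dependency that kills extremeness of $s$ or of that element. (ii) Your "cleaner variant" for progressiveness is false: adding a chord between two points already in $\pts{C}$ whose lower shadow contains no point of $P$ changes neither $B$ nor the endpoint set. The $m$-based route is the one that works, and, as you anticipate, its subcase $\lft{s}=m(C)$ needs a genuine geometric argument (the old $\rex{C}$ can no longer be extreme, and each way it loses extremeness contradicts $\tau(C)=\tau(C')$).
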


% ここから連結性
From now on, we propose our technique to deal with connectivity for crossing-free spanning trees.
To do this, we focus on the property of ``prefixes'' of crossing-free spanning trees.

Let $C, C' \subseteq \sgm$.
We call $C$ a \emph{prefix} of $C'$ if there exists a sequence of segments $s_1, \dots, s_k$ such that $C \xrightarrow{s_1} C_1, \dots, C_{k-1} \xrightarrow{s_k} C'$.
When $C$ is a prefix of $C'$, we say that $C'$ \emph{extends} $C$.
% We focus on the property of prefixes of crossing-free spanning trees.
Let $U \subseteq P$ be a connected component in $C$.
We call $U$ a \emph{hidden component} if, for every point $p$ in $U$, there exists a segment $s \in C$ such that $p \in \low{s}$.
Intuitively, such a component is invisible from above because of other segments.
\begin{lemma}\label{lem:prestr}
    Let $C \subseteq \sgm$ be a set of crossing-free segments.
    If $C$ is a prefix of a crossing-free spanning tree $C^* \in \str$, then all of the following hold:
    \begin{enumerate}
        \item[A1.] $C$ is cycle-free, and
        \item[A2.] there are no hidden components in $C$.
    \end{enumerate}
\end{lemma}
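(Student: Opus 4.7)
The plan is to establish A1 and A2 separately, both starting from the observation that ``$C$ is a prefix of $C^*$'' forces $C \subseteq C^*$, since each step $C_{j-1} \xrightarrow{s_j} C_j$ only adds a segment.

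For A1 the argument is almost immediate: $C^*$ is a spanning tree and hence cycle-free, so every subset of $C^*$, in particular $C$, is cycle-free.

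For A2 I would proceed by contradiction, assuming some $U \subseteq P$ is a hidden component of $C$. First I would observe that the unique topmost point $p^{*} \in P$ (the point maximizing the $y$-coordinate) cannot lie in $\low{s}$ for any $s \in \sgm$: both endpoints of $s$ have $y$-coordinate at most $y(p^{*})$, so $s$ never strictly rises above $p^{*}$, and a vertical ray shot upward from $p^{*}$ can hit no segment. Therefore $p^{*} \notin U$, and in particular $U \subsetneq P$. Since $C^{*}$ spans and connects $P$ while $U$ is an entire connected component of $C$, there must exist a segment $s \in C^{*} \setminus C$ with an endpoint $p \in U$ (used to link $U$ to $P \setminus U$ in the tree). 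By the hidden-component hypothesis, there is some $s'' \in C$ with $p \in \low{s''}$; note $s'' \neq s$ because $s'' \in C$ while $s \notin C$.

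The crux is then to look at when $s$ is added along the extension sequence $C = C_{0} \xrightarrow{s_{1}} C_{1} \xrightarrow{s_{2}} \cdots \xrightarrow{s_{k}} C^{*}$. Writing $s = s_{i}$, the definition forces $s_{i} = \rex{C_{i}}$, so $s_{i}$ must be extreme in $C_{i}$. But $p \in \pts{s_{i}} \cap \low{s''}$ witnesses $s_{i} \depend s''$, i.e.\ $s''$ depends on $s_{i}$. Since $s'' \in C \subseteq C_{i} \setminus \{s_{i}\}$, this contradicts the extremeness of $s_{i}$, finishing A2.

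The main subtlety I expect is bookkeeping the direction of the dependency relation: $s_{i} \depend s''$ means $s''$ depends on $s_{i}$, so a newly added segment that ``dips below'' an already-placed segment of $C$ forces the older segment to depend on it and thereby destroys extremeness. The other delicate point is the $U = P$ case, which is ruled out in advance by the topmost-point observation so that a connecting segment $s \in C^{*} \setminus C$ is guaranteed to exist.
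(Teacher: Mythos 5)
Your proof is correct and takes essentially the same route as the paper's: the heart of both arguments is that a segment incident to a point $p \in \low{s''}$ for some $s'' \in C$ can never be the right-most extreme element of any set containing $C$, so no extension step can ever join the hidden component $U$ to the rest of $P$, contradicting the connectivity of $C^*$. Your explicit topmost-point observation ruling out $U = P$ is a small piece of bookkeeping the paper leaves implicit.
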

\begin{proof}
    A1 is obviously necessary.
    Assume that $C$ violates A2.
    Now, there exists a hidden component $U$ in $C$.
    In any $C'$ extending $C$, $r := \rex{C'}$ is not incident to any point in $U$.
    If $r$ is incident to a point $p$ in $U$, since there exists a segment $s$ in $C'$ such that $p \in \low{s}$, we have $r \depend s$, contradicting that $r$ is the right-most extreme element in $C'$.
    It follows that, in any $C'$ extending $C$, there are at least two connected components in $C'$: $U$ and the one containing $r$.
    This means that $C$ cannot be extended to any crossing-free spanning tree, contradicting that $C$ is a prefix of a crossing-free spanning tree.
\end{proof}

We define $\str$ as the set of crossing-free spanning trees on $P$ and $\prestr$ as the set of combinations of $\sgm$ satisfying both conditions A1 and A2 in \cref{lem:prestr}.
In other words, $\prestr$ is the superset of real prefixes of crossing-free spanning trees on $P$.
Especially, $\prestr$ properly contains $\str$, and thus we only have to consider $\prestr$ to obtain $\str$.
Note that, for every non-empty $C \in \prestr$, removing $\rex{C}$ from $C$ does not violate any conditions in \cref{lem:prestr}.
Therefore, we obtain the following lemma.
\begin{lemma}\label{lem:prestr_serializable}
    For any point set $P$, the set $\prestr$ is serializable.
\end{lemma}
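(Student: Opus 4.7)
The plan is to verify the three clauses of Definition~\ref{def:serializable} for $\prestr$. Non-emptiness is immediate, since the empty combination satisfies A1 and A2 vacuously. For the remaining clauses, fix a non-empty $C \in \prestr$. Since every combination in $\prestr$ is crossing-free (it is meant as a superset of prefixes of crossing-free spanning trees, so $\prestr \subseteq \cfg$), Lemma~\ref{lem:cf_serialize} supplies a unique right-most extreme element $\rex{C}$, and it remains only to show that $C' := C \setminus \{\rex{C}\}$ again lies in $\prestr$.

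Clause A1 is trivial: removing one segment from a cycle-free set leaves it cycle-free. The real work is clause A2, that deleting $\rex{C}$ does not create a hidden component. I would argue by contradiction. Suppose $C'$ has a hidden component $U$ and write $s := \rex{C}$ with endpoints $p_1, p_2$. The key geometric input is a consequence of extremeness: because $s$ is extreme in $C$, we have $\pts{s} \cap \low{s'} = \emptyset$ for every $s' \in C \setminus \{s\} = C'$, so neither $p_1$ nor $p_2$ lies below any segment of $C'$. Now split on whether $U$ meets $\{p_1, p_2\}$. If some $p_i \in U$, then hiddenness of $U$ forces $p_i \in \low{s'}$ for some $s' \in C'$, directly contradicting the above. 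If $U$ is disjoint from $\{p_1, p_2\}$, then re-adding $s$ to $C'$ can affect only the components containing $p_1$ and $p_2$, so $U$ remains a component in $C$; moreover, the same segments of $C' \subseteq C$ that witness below-coverage of $U$ in $C'$ continue to witness it in $C$, making $U$ a hidden component of $C$ and contradicting $C \in \prestr$.

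I expect the main obstacle to be correctly unpacking the dependence relation $\depend$: its definition mixes a lower-shadow clause on $\pts{s_1}$ with an upper-shadow clause on $\pts{s_2}$, so one must check carefully that ``$s$ is extreme in $C$'' (no $s' \in C$ satisfies $s \depend s'$) yields the specific implication ``no endpoint of $s$ lies below any segment of $C'$''. Once this geometric statement is in hand, both subcases of A2 close in one line each, and no further obstacle remains.
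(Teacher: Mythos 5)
Your proof is correct and follows the same route as the paper, which merely asserts (in the sentence preceding the lemma) that removing $\rex{C}$ from a non-empty $C \in \prestr$ violates neither A1 nor A2. Your case analysis for A2 --- using that extremeness of $s=\rex{C}$ gives $\pts{s}\cap\low{s'}=\emptyset$ for all $s'\in C\setminus\{s\}$, and that components avoiding $\pts{s}$ survive unchanged --- correctly fills in the detail the paper leaves implicit.
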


Now, we propose our technique to deal with connectivity.
For $C \in \prestr$, we define the partition $\Pi(C)$ of $G(C)$ such that two points $x, y \in G(C)$ are connected in $C$ if and only if they are in the same set in $\Pi(C)$.
If there is no ambiguity, we omit $C$ from $\Pi(C)$ and denote $\Pi$.
Finally, we define $\phi(C) = (W, G, B, m, \Pi)$ and the equivalence relation $\sim_{\phi}$ on $\prestr$ such that $C_1 \sim_{\phi} C_2$ if and only if $\phi(C_1) = \phi(C_2)$.
\cref{fig:prestr} shows two equivalent elements of $\prestr$.
In the figure, white, gray, and black points are in $W$, $G$, and $B$, respectively.
A point with a bold circle is $m$.
The ellipses indicate the partition $\Pi$ of $G$.

\begin{figure}[t]
    \centering
    \begin{subfigure}{.30\linewidth}
        \centering
        \includegraphics[scale=0.5]{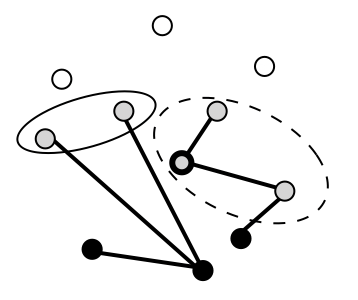}
    \end{subfigure}
    \begin{subfigure}{.30\linewidth}
        \centering
        \includegraphics[scale=0.5]{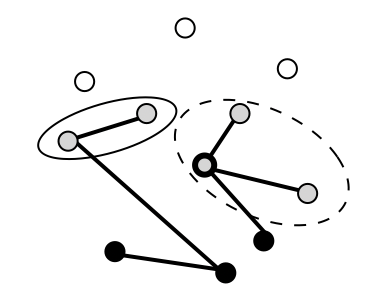}
    \end{subfigure}
    \caption{Two equivalent elements of $\prestr$.}
    \label{fig:prestr}
\end{figure}

\begin{lemma}[*]\label{lem:phi_coherent}
    The equivalence relation $\sim_{\phi}$ on $\prestr$ is coherent.
    In addition, $\prestr$ is progressive on $\sim_{\phi}$.
\end{lemma}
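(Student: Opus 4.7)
The plan is to reduce both claims to a single observation: for any $C \in \prestr$ and $s \in \sgm$, whether $C \cup \{s\} \in \prestr$ with $s = \rex{C \cup \{s\}}$, and if so the value $\phi(C \cup \{s\})$, are determined solely by $\phi(C)$ and $s$. Coherence then follows directly: if $C_1 \sim_\phi C_2$ and $C_1 \xrightarrow{s} C_1'$, the same $s$ may be appended to $C_2$, yielding a $C_2'$ with $\phi(C_2') = \phi(C_1')$. The $\tau$-portion of this observation---crossing-freeness of $C \cup \{s\}$, $s$ being its right-most extreme element, and the resulting $(W,G,B,m)$---is supplied by \cref{lem:tau}. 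It therefore suffices to handle conditions A1 and A2 together with the partition $\Pi$.

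For the remaining ingredients I would argue as follows. Since $s$ is extreme in $C \cup \{s\}$, neither endpoint of $s$ lies in the lower shadow of any segment of $C$, so $\pts{s} \subseteq W(C) \cup G(C)$. Adding $s$ creates a cycle iff both endpoints lie in $G(C)$ and in the same class of $\Pi(C)$, so A1-preservation is a function of $\phi(C)$ and $s$. For A2, in $\prestr$ every connected component of $C$ contains at least one gray point, which gives a bijection between the classes of $\Pi(C)$ and the connected components of $C$. Hence a hidden component appears in $C \cup \{s\}$ precisely when some class $X \in \Pi(C)$ satisfies $X \subseteq \low{s}$ and $X \cap \pts{s} = \emptyset$; this too is a function of $\phi(C)$ and $s$. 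Finally, $\Pi(C \cup \{s\})$ itself is obtained deterministically from $(\Pi(C), W(C), \low{s}, \pts{s})$ by deleting the points of $\low{s}$ from their classes, adjoining $\lft{s}$ and $\rgt{s}$ as singletons when they previously lay in $W(C)$, and merging the two classes now containing the endpoints of $s$.

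For progressiveness I would split on the endpoints of $s$. If some endpoint lies in $W(C)$, it moves to $G(C')$, so already $\tau(C) \neq \tau(C')$; otherwise both endpoints lie in $G(C)$, and A1 applied to $C'$ forces them into distinct classes of $\Pi(C)$, which strictly coarsen to $\Pi(C')$ under the merge described above. Either way $\phi(C) \neq \phi(C')$. The only step that goes beyond Wettstein's $\tau$-analysis---and thus the main obstacle---is pinning down the bijection between $\Pi$-classes and connected components of elements of $\prestr$ and using it to characterise exactly when A2 fails after inserting $s$; once that correspondence is in hand, the remaining bookkeeping is routine.
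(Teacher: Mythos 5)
Your proposal is correct and follows essentially the same route as the paper's proof: both reduce coherence to the observation that membership of $C \cup \set{s}$ in $\prestr$ and the value of $\phi(C \cup \set{s})$ are determined by $\phi(C)$ and $s$ alone, invoking \cref{lem:tau} for the $(W,G,B,m)$ part and using the partition $\Pi$ to test A1 (endpoints in the same class), to test A2 (a class contained in $\low{s}$, via the correspondence between classes of $\Pi$ and components guaranteed by A2), and to compute the updated partition by the same delete/adjoin/merge rule the paper encodes with $\Pi^{+}$ and restriction to $G(C')$. Two minor points of divergence: the paper obtains progressiveness for free from the progressiveness of $\cfg$ on $\sim_{\tau}$ (since $\sim_{\phi}$ refines $\sim_{\tau}$), whereas your direct case analysis on the endpoints of $s$ is a valid but slightly longer substitute; and your write-up does not explicitly verify $s \notin C_2$, which is needed for $C_2 \xrightarrow{s} C_2 \cup \set{s}$ to be a legal transition---the paper closes this with progressiveness, and your own determinism-plus-progressiveness argument closes it as well once stated ($s \in C_2$ would force $\phi(C_1) = \phi(C_2) = \phi(C_2 \cup \set{s}) = \phi(C_1')$, contradicting progressiveness).
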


To obtain a bound on the size of a combination graph representing $\str$, we analyze the number of equivalence classes, that is, $\size{(\prestr / {\sim_{\phi}})}$.
Let us begin with a rough estimation.
The number of possible $(W, G, B)$'s is $\bigO{3^n}$.
The number of possible $m$'s is $\bigO{n}$.
Since $\Pi$ is a partition of at most $n$ points, the number of possible $\Pi$'s is at most the $n$-th Bell number, which is $\bigO{n^n}$.
It follows that $\size{(\prestr / {\sim})} = \bigO{(3n)^n n}$.
This bound is not in the form $\ostar{c^n}$ for some constant $c$.
From now on, we show
%
%that there is a considerably smaller number of equivalence classes.
%
a considerably smaller
estimation on the number of equivalence classes.

Fortunately, we have the following observations.
A partition of $\set{1,\dots,N}$ is \emph{non-crossing}\footnote{The word ``non-crossing'' for a partition is independent from the word ``crossing-free'' for a set of segments.}~\cite{Sim00} if, for every four elements $1 \le a < b < c < d \le N$, $a, c$ are in the same set and $b, d$ are in the same set, then the two sets coincide.

\begin{lemma}\label{lem:non-crossing}
     For any $C \in \prestr$, let us order the points in $G$ as $p'_1, \dots, p'_{|G|}$ from left to right.
     Then, $\Pi$ is a non-crossing partition of $\set{p'_1, \dots, p'_{|G|}}$.
\end{lemma}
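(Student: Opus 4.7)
The plan is to prove the lemma by contradiction, exploiting the defining property of $G$: every point in $G$ has an unobstructed vertical ray going upward, because $G \subseteq P \setminus \low{C}$. This ``visibility from above'' is what forces the component structure of $C$ (restricted to $G$) to be non-crossing.

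Concretely, suppose the partition $\Pi$ is crossing: there exist indices $a < b < c < d$ with $p'_a, p'_c$ in one block and $p'_b, p'_d$ in a different block. Since each block is a union of connected components of $C$, I would fix a path $\pi_1$ in $C$ from $p'_a$ to $p'_c$ and a path $\pi_2$ in $C$ from $p'_b$ to $p'_d$; because the two blocks are distinct, $\pi_1$ and $\pi_2$ share no vertex, and because $C$ is crossing-free, they share no interior point of any segment either.

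The geometric heart of the argument is a Jordan-curve construction. I would enclose the picture in a sufficiently large axis-aligned bounding box $B$ and form the closed curve $\gamma_1$ by concatenating $\pi_1$, the upward vertical segment from $p'_c$ to the top of $B$, an arc along the top of $B$, and the downward vertical segment back to $p'_a$. Let $R_1$ be the bounded region that $\gamma_1$ carves out ``above'' $\pi_1$ between the verticals at $p'_a$ and $p'_c$. Since $p'_b \in G$, its upward ray meets no segment of $C$; in particular it does not meet $\pi_1$. Because the $x$-coordinate of $p'_b$ lies strictly between those of $p'_a$ and $p'_c$, following this upward ray from $p'_b$ exits $R_1$ only through the top of $B$, which places $p'_b$ inside $R_1$. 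On the other hand $p'_d$, whose $x$-coordinate exceeds that of $p'_c$, lies outside $R_1$.

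Therefore $\pi_2$ must cross the boundary of $R_1$. It cannot cross $\pi_1$, since $C$ is crossing-free and $\pi_1, \pi_2 \subseteq C$ are internally disjoint. It cannot reach the top of $B$, which lies above every segment of $C$ once $B$ is chosen large enough. So $\pi_2$ must cross one of the two upward vertical rays from $p'_a$ or $p'_c$; but a segment $s \in \pi_2 \subseteq C$ crossing the upward ray from $p'_a$ (resp.\ $p'_c$) would witness $p'_a \in \low{s}$ (resp.\ $p'_c \in \low{s}$), contradicting $p'_a, p'_c \in G \subseteq P \setminus \low{C}$. The main obstacle will be making the Jordan-curve step rigorous without getting bogged down in topology; I expect a clean way to handle it is to work in a large bounding box as above and to invoke the Jordan curve theorem once, applied to the piecewise-linear simple closed curve $\gamma_1$.
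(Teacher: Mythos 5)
Your proof is correct and takes essentially the same route as the paper: assume a crossing pattern $a<b<c<d$, take the two vertex-disjoint paths joining the matched pairs, and use the fact that points of $G$ lie in $P \setminus \low{C}$ (so their upward vertical rays meet no segment of $C$) to force a forbidden crossing between the two paths. The paper phrases the final step more tersely --- $\pi_1$ must pass \emph{under} $p'_b$ and $\pi_2$ must pass \emph{under} $p'_c$, so the two paths swap vertical order and must cross --- and your Jordan-curve/bounding-box construction is simply a more explicit formalization of that same topological step.
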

\begin{proof}
    For $1 \le i < j < k < l \le |G|$, assume that $p'_i$ and $p'_k$ are in $U_1 \in \Pi$ and $p'_j$ and $p'_l$ are in $U_2 \in \Pi$.
    Now, we assume that $U_1 \neq U_2$, which leads to a contradiction.
    By A1 ($C$ is cycle-free), there is the unique path $S_1$ from $p'_i$ to $p'_k$ in $C$.
    Since $p'_j$ is in $G = \pts{C} \setminus \low{C}$, the path $S_1$ passes under $p'_j$.
    Likewise, there is the unique path $S_2$ in $C$ from $p'_j$ to $p'_l$ and it passes under $p'_k$.
    Since $U_1 \neq U_2$, the paths $S_1$ and $S_2$ do not share their vertices.
    This means that there exists a pair of segments $s_1 \in S_1$ and $s_2 \in S_2$ such that they are crossing, which contradicts that $C$ is crossing-free.
\end{proof}

It is known that the number of non-crossing partitions of $N$ elements is the $N$-th Catalan number $\catl{N}$~\cite{Sim00}, which is at most $4^N$~\cite[p.~450]{Knu11}.
This is much smaller than the number of general partitions, the $N$-th Bell number, which is $\bigO{N^N}$.
Using these facts, we can improve the previous rough estimation on the number of the equivalence classes.
Since the number of $(W, G, B)$ is $\bigO{3^n}$, $m$ is $\bigO{n}$, and $\Pi$ is $\bigO{4^n}$, it follows that $\size{(\prestr / {\sim_{\phi}})} = \bigO{12^n}$.
Now, we have obtained a bound with the form $\ostar{c^n}$ for a constant $c$.
However, this estimation is still rough because $\Pi$ does not always contain $n$ points and contains only the points in $G$.
The following lemma shows a substantially smaller estimation on the number of the equivalence classes.

\begin{lemma}\label{lem:phi_size}
    $\size{(\prestr / {\sim_{\phi}})} = \bigO{6^n n}$.
\end{lemma}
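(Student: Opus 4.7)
The plan is to refine the earlier rough count by exploiting that the partition $\Pi$ only involves the points in $G$, not all $n$ points, and by combining this with \cref{lem:non-crossing} which says $\Pi$ is non-crossing. So instead of bounding each factor independently I will stratify the count by the size $k := |G|$, and then use the binomial theorem to collapse the sum into a clean exponential.

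Concretely, first I would fix $k$ and count equivalence classes with $|G|=k$. The set $G$ itself is chosen from $P$ in $\binom{n}{k}$ ways; having fixed $G$, each of the remaining $n-k$ points is independently in $W$ or $B$, contributing a factor of $2^{n-k}$. The marked point $m \in G \cup \{\mathtt{nil}\}$ contributes a factor of at most $n$. By \cref{lem:non-crossing}, $\Pi$ is a non-crossing partition of the $k$ points of $G$ (ordered left to right), and the number of non-crossing partitions of a $k$-element ordered set is the Catalan number $\catl{k} \le 4^k$.

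Multiplying these factors and summing over $k$, the number of equivalence classes is at most
\[
    n \sum_{k=0}^{n} \binom{n}{k} 2^{n-k} \catl{k} \;\le\; n \sum_{k=0}^{n} \binom{n}{k} 2^{n-k} 4^k \;=\; n \,(2+4)^n \;=\; 6^n n,
\]
where the second equality is the binomial theorem. Hence $\size{(\prestr/{\sim_{\phi}})} = \bigO{6^n n}$, as claimed.

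The only substantive obstacle is justifying the Catalan bound on $\Pi$; but this is exactly \cref{lem:non-crossing} combined with the standard fact (citing \cite{Sim00} and \cite[p.~450]{Knu11}) that non-crossing partitions of an $N$-set are counted by $\catl{N} \le 4^N$. Everything else is a routine partition-of-unity counting argument, and the key saving over the naive $12^n$ bound comes from the fact that points in $W$ and $B$ contribute only a factor of $2$ each (rather than $3 \cdot 4 = 12$) since they are not participants in the partition $\Pi$.
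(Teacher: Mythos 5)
Your proof is correct and follows essentially the same route as the paper's: both bound $\Pi$ by the Catalan number $\catl{|G|} \le 4^{|G|}$ via \cref{lem:non-crossing} and then collapse the sum over colorings into $(1+4+1)^n = 6^n$; the paper phrases this with the multinomial theorem over the triple $(|W|,|G|,|B|)$, while you use the equivalent binomial form $\sum_k \binom{n}{k} 4^k 2^{n-k}$. No gaps.
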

\begin{proof}
    For every partition $(W, G, B)$ of $P$, $\Pi$ is a non-crossing partition of $|G|$ elements.
    Therefore, the number of $\Pi$ is at most $\catl{|G|} \le 4^{|G|}$ when we fix $(W, G, B)$.
    Let $i$, $j$, and $k$ be the sizes of $W$, $G$, and $B$, respectively.
    The number of $(W, G, B)$ such that $|W| = i$, $|G| = j$, and $|B| = k$, is $\displaystyle\frac{n!}{i!\,j!\,k!}$.
    Using these facts and the multinomial theorem, the number of $(W, G, B, \Pi)$ is at most
    \begin{align*}
        \sum_{\substack{W, G, B \subseteq P,\\ W \cup G \cup B = P,\\ W \cap G = G \cap B = W \cap B = \emptyset}} \catl{|G|}
        \quad &= \sum_{i + j + k = n} \frac{n!}{i!\,j!\,k!} \cdot \catl{j} \\
        \quad &\le \sum_{i + j + k = n} \frac{n!}{i!\,j!\,k!} \cdot 1^i \cdot 4^j \cdot 1^k \\
        \quad &= (1 + 4 + 1)^n = 6^n.
    \end{align*}
    Since the number of $m$ is at most $n$, we obtain $\size{(\prestr / {\sim_{\phi}})} = \bigO{6^n n}$.
\end{proof}
% connected の定義

From \cref{lem:size,lem:prestr_serializable,lem:phi_coherent,lem:phi_size}, we obtain the bound on the size of a combination graph representing $\prestr$.
To show the time complexity to construct the combination graph, it suffices to add another factor $n$ to the size of the combination graph because we can check the membership in $\prestr$ in $\bigO{n}$ time, as shown in \cref{app:spanning_tree}.

\begin{theorem}[*]\label{theo:spanning_tree}
    Let $P$ be a set of $n$ points in the plane in general position. Then, there exists a combination graph of size $\bigO{6^n n^3}$ that represents $\str$.
    We can construct it in $\bigO{6^n n^4}$ time.
\end{theorem}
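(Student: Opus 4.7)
The plan is to assemble the four lemmas developed in this subsection. By \cref{lem:prestr_serializable} the set $\prestr$ is serializable, by \cref{lem:phi_coherent} the equivalence relation $\sim_{\phi}$ is coherent and $\prestr$ is progressive on it, and by \cref{lem:phi_size} we have $\size{(\prestr/{\sim_\phi})} = \bigO{6^n n}$. I therefore invoke \cref{lem:size} with $\mf{T}$ chosen to be the collection of classes $[C] \in (\prestr/{\sim_\phi})$ whose representatives lie in $\str$. Since $\str \subseteq \prestr$, and since membership in $\str$ is determined by the tuple $\phi(C) = (W, G, B, m, \Pi)$ (namely, $C \in \str$ iff $W = B = \emptyset$ and $|\Pi| = 1$), this selection is well-defined. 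The lemma then immediately produces a combination graph $\Gamma$ of size $\bigO{\size{(\prestr/{\sim_\phi})} \cdot n^2} = \bigO{6^n n^3}$ whose $\bot$-$\top$ paths are in bijection with the crossing-free spanning trees of $P$.

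For the construction itself I would perform a breadth-first exploration of equivalence classes starting from $\bot \leftrightarrow \emptyset$, keeping a hash table indexed by the tuple $\phi(C)$ so that equivalent combinations collapse automatically. From each explored tuple I iterate over all $\bigO{n^2}$ candidate segments $s \in \sgm$, test whether $C \cup \set{s}$ still lies in $\prestr$ and has $s$ as its right-most extreme element, and, if so, update the tuple to $\phi(C \cup \set{s})$ and install the labeled outgoing edge. Unlabeled edges to $\top$ are added for exactly those tuples that satisfy the condition in the previous paragraph. Multiplying the $\bigO{6^n n}$ classes, the $\bigO{n^2}$ candidate segments per class, and the cost of one membership test for $\prestr$ yields the overall running time.

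The only non-trivial ingredient is therefore the claim that membership of $C \cup \set{s}$ in $\prestr$ can be decided in $\bigO{n}$ time given $\phi(C)$ and $s$, and this is the step I expect to be the main obstacle; the detailed implementation is deferred to \cref{app:spanning_tree}. The check decomposes into three tasks: (i) verifying that $s$ is crossing-free with respect to $C$ and is the right-most extreme element of $C \cup \set{s}$, which can be read off from $(W,G,B,m)$ together with $\pts{s}$, $\low{s}$, and $\upp{s}$; (ii) cycle-freeness, which reduces to checking that the two endpoints of $s$ lie in different blocks of $\Pi$ (with the case of a fresh endpoint in $W$ treated separately); and (iii) the absence of hidden components, for which it suffices to inspect the blocks of $\Pi$ newly covered by $\low{s}$ and verify that each such block still contains at least one point outside $\low{C \cup \set{s}}$. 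Each task touches $\bigO{n}$ points or blocks, so the membership test runs in $\bigO{n}$ time, giving total construction time $\bigO{6^n n \cdot n^2 \cdot n} = \bigO{6^n n^4}$ as claimed.
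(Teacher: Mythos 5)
Your overall route is exactly the paper's: combine \cref{lem:prestr_serializable}, \cref{lem:phi_coherent}, and \cref{lem:phi_size} with \cref{lem:size} for the size bound, then argue that the step $C \xrightarrow{s} C'$ can be validated in $\bigO{n}$ time from $\phi(C)$ alone to get the extra factor $n$ in the construction time. Your $\bigO{n}$ membership test (cycle test via the blocks of $\Pi$ containing $\pts{s}$, hidden-component test via blocks swallowed by $\low{s}$) also matches the paper's appendix argument.

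There is, however, one concrete error: your characterization of the terminal classes. You claim $C \in \str$ iff $W = B = \emptyset$ and $|\Pi| = 1$. The condition $B = \emptyset$ is wrong and would discard most spanning trees: $B(C) = \low{C}$ is the set of points lying in the lower shadow of some segment of $C$, and a crossing-free spanning tree routinely has such points (e.g., for $p_1 = (0,0)$, $p_2 = (1,-1)$, $p_3 = (2,0)$, the tree $\set{p_1p_3,\, p_1p_2}$ has $p_2 \in \low{p_1p_3}$, so $B \neq \emptyset$). The correct condition is $W = \emptyset$ and $|\Pi| = 1$: spanning forces $W = P \setminus \pts{C} = \emptyset$, and conversely, since A2 guarantees every connected component contains a point of $G \cup W$, the conditions $W = \emptyset$ and $|\Pi| = 1$ together with A1 force $C$ to be a single spanning tree. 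Points of $B$ are simply absorbed into whichever component covers them and impose no emptiness requirement. With that one fix your argument goes through and coincides with the paper's proof.

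A smaller remark: when testing A2 for $C' = C \cup \set{s}$ you inspect only the blocks of $\Pi(C)$ covered by $\low{s}$; strictly speaking an isolated point $p \in W(C)$ with $p \in \low{s}$ also becomes a hidden singleton component and must be rejected. The paper's own appendix states the test the same way you do, so this does not distinguish your proof from theirs, but it is worth keeping in mind when you claim the test uses only $\phi(C)$.
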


\subsection{Crossing-free spanning cycles}\label{sec:cycle}
In this subsection, we propose an algorithm to compile crossing-free spanning cycles.
% spanning cycle の性質を使って計算量を落とすために，
% spanning cycle 用の W, G, B を定義する．
% さらに，spanning cycle 用の連結性も定義する．
By definition, $C \subseteq \sgm$ is a crossing-free spanning cycle if and only if 1) $C$ is crossing-free, 2) all the points have degree 2 in $C$, and 3) all the points are connected in $C$.
To deal with the second condition, we modify the definition of $W$, $G$, and $B$ in \cref{sec:tree} so that they partition $P$ into the sets of points that have the same degree.
In fact, using these modified $W$, $G$, and $B$, we can check the first condition.
To deal with the third condition, we propose a specialized technique to deal with connectivity for crossing-free spanning cycles, which leads to a better complexity than the algorithm for crossing-free spanning trees.

% To ensure the first condition, we use Wettstein's algorithm to compile all crossing-free geometric graphs.
% As for the second and the third condition, it suffices to maintain the connectivity of points in $C$.
% To deal with the connectivity efficiently, we propose a new simple and efficient technique, which leads to an improved complexity.

% We aim to reduce the time and space complexity by exploiting the properties of crossing-free spanning cycles.
% For this purpose, we first modify the definition of $W, G,$ and $B$ in \cref{sec:tree} for crossing-free spanning cycles.
% Moreover, we propose a specialized technique to deal with connectivity for crossing-free spanning cycles.
% % Using the property of crossing-free spanning cycles, we can achieve an improved time and space complexity than that for crossing-free spanning trees.

As we have done in \cref{sec:tree}, we focus on the property of the prefixes of crossing-free spanning cycles.
In the following, the \emph{degree} of a point $p$ in $C$ is the number of segments incident to $p$, denoted by $\degr{p}{C}$.
We call $U \subsetneq P$ is an \emph{isolated cycle} in $C$ if $U$ is a connected component in $C$ and all the points in $U$ have degree 2.
Note that an isolated cycle is not necessarily a hidden component.
\begin{lemma}\label{lem:prescy}
    Let $C \subseteq \sgm$ be a set of crossing-free segments.
    If $C$ is a prefix of a crossing-free spanning cycle $C^* \in \scy$, then all of the following hold:
    \begin{enumerate}
        \item[B1.] $\degr{p}{C} \le 2$ for every point $p$,
        \item[B2.] for every point $p$ of $\degr{p}{C} < 2$, $C$ has no segment $s$ such that $p \in \low{s}$, and
        \item[B3.] there are no isolated cycles in $C$.
    \end{enumerate}
\end{lemma}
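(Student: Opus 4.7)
The plan is to exploit two consequences of $C$ being a prefix of $C^*$: (i) the inclusion $C \subseteq C^*$, which is immediate from the successor relation $\xrightarrow{s}$ (each step only removes a segment from the larger combination); and (ii) the fact that $C^*$ is a single cycle spanning $P$, so every point has degree exactly $2$ in $C^*$ and $C^*$ is connected. Each of B1, B2, B3 will follow by combining (i) and (ii) with the definition of the right-most extreme element.

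B1 is one line: since $C \subseteq C^*$ and $\degr{p}{C^*} = 2$ for every $p \in P$, we have $\degr{p}{C} \le 2$. For B2, I would argue by contradiction exactly as in \cref{lem:prestr}. Suppose $p$ satisfies $\degr{p}{C} < 2$ and $p \in \low{s}$ for some $s \in C$. Since $\degr{p}{C^*} = 2$, extending $C$ to $C^*$ must add some segment $r$ incident to $p$; let $C'$ be the intermediate combination at the step where $r$ first appears, so that $r = \rex{C'}$. Because segments are only appended along the serialization from $C$ to $C^*$, we still have $s \in C'$. Then $p \in \pts{r} \cap \low{s}$ yields $s \depend r$, contradicting the extremality of $r$ in $C'$. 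For B3, suppose $U \subsetneq P$ is an isolated cycle in $C$. Every point of $U$ already has degree $2$ in $C$, so no segment of $C^* \setminus C$ can be incident to $U$ without violating $\degr{p}{C^*} = 2$. Hence $U$ is still a connected component of $C^*$; but $C^*$ is connected and spans $P$, forcing $U = P$ and contradicting $U \subsetneq P$.

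The main obstacle is B2, as it is the only step that genuinely uses the serialization structure rather than mere degree counting. However, the argument transfers almost verbatim from the hidden-component case in \cref{lem:prestr}, so I expect no real difficulty; B1 and B3 are routine bookkeeping with the degree-$2$ condition and connectivity of $C^*$.
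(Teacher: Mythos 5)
Your proof is correct and follows essentially the same route as the paper's: B1 from $C \subseteq C^*$ and $\degr{p}{C^*}=2$, B2 by taking the step of the serialization that first adds a segment $r$ incident to $p$ and deriving a dependency contradiction with the extremality of $r = \rex{C'}$, and B3 via the degree/connectivity clash (you phrase it contrapositively to the paper, which argues that connectivity forces a new segment into $U$ and hence a degree violation, but the two are equivalent). One notational nit: from $p \in \pts{r}\cap\low{s}$ the paper's convention gives $r \depend s$ (``$s$ depends on $r$''), not $s \depend r$; it is the former that contradicts the extremality of $r$, so the symbol direction should be flipped, though the underlying reasoning is sound.
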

\begin{proof}
    Since $\degr{p}{C^*} = 2$ for every point $p$ and a crossing-free spanning cycle $C^*$, B1 is necessary.
    Assume that $C$ does not satisfy B2.
    Then, there is a point $p$ such that $\degr{p}{C} < 2$ and a segment $s \in C$ such that $p \in \low{s}$.
    For any $C'$ extending $C$, $\rex{C'}$ is not incident to $p$ because, if so, $\rex{C'} \depend s$, which is a contradiction.
    It means that $\degr{p}{C'} = \degr{p}{C} < 2$ for any $C'$ extending $C$, which contradicts that $C$ is a prefix of a crossing-free spanning cycle.

    Assume that $C$ does not satisfy B3.
    Then, there exists an isolated cycle $U \subsetneq P$ in $C$.
    If there exists a crossing-free spanning cycle $C^*$ extending $C$, then $C^*$ contains at least one segment in $C^* \setminus C$ incident to a point $p \in U$.
    However, this means that $\degr{p}{C^*} > \degr{p}{C} = 2$, contradicting that $C^*$ is in $\scy$.
\end{proof}

We define $\scy$ as the set of all crossing-free spanning cycles and $\prescy$ as the set of combinations of $\sgm$ satisfying all the conditions in \cref{lem:prescy}.
Since $\scy \subseteq \prescy$ holds, to obtain $\scy$, it suffices to consider $\prescy$.
Note that all the conditions from B1 to B3 are maintained when removing $\rex{C}$ from any $C \in \prescy$, which shows the following lemma.
\begin{lemma}\label{lem:prescy_serializable}
    For any point set $P$, the set $\prescy$ is serializable.
\end{lemma}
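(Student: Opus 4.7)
The plan is to verify the two clauses of \cref{def:serializable}. Non\-emptiness is immediate: the empty combination vacuously satisfies B1, B2, and B3, so $\emptyset \in \prescy$. For any non-empty $C \in \prescy$, $C$ is crossing-free (an implicit precondition throughout, since the conditions of \cref{lem:prescy} are stated for crossing-free sets), so $C \in \cfg$, and by \cref{lem:cf_serialize} every non-empty $C \in \cfg$ has a right-most extreme element $\rex{C}$. What remains, and is the heart of the proof, is to check that $C' := C \setminus \{\rex{C}\}$ still lies in $\prescy$, i.e., that B1, B2, and B3 are preserved (crossing-freeness is automatic since $C' \subseteq C$).

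Condition B1 is immediate: deleting a segment does not increase any degree, so $\degr{p}{C'} \le \degr{p}{C} \le 2$ for every $p$.

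For B2, only the two endpoints of $\rex{C}$ can have their degrees drop, so for any $p \notin \pts{\rex{C}}$ we have $\degr{p}{C'} = \degr{p}{C}$ together with $C' \subseteq C$, whence B2 on $C$ transfers directly. For $p \in \pts{\rex{C}}$, the key observation is extremeness: by definition, $\rex{C} \not\depend s'$ for every $s' \in C \setminus \{\rex{C}\}$, and unpacking the dependency relation gives $\pts{\rex{C}} \cap \low{s'} = \emptyset$ for every such $s'$. In particular no $s' \in C'$ has $p \in \low{s'}$, which is exactly what B2 requires.

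For B3 I plan to argue by contradiction: suppose $U \subsetneq P$ is an isolated cycle in $C'$. If $U \cap \pts{\rex{C}} = \emptyset$, then deleting $\rex{C}$ does not touch $U$, so $U$ is already a connected component of $C$ in which all vertices have degree $2$, contradicting B3 on $C$. Otherwise some endpoint $p$ of $\rex{C}$ lies in $U$; since $p$ has degree $2$ in $C'$ and its degree dropped by exactly one upon deleting $\rex{C}$, we get $\degr{p}{C} = 3$, contradicting B1 on $C$. The main subtlety I anticipate is in B2: one must translate the abstract extremeness condition $\rex{C} \not\depend s'$ into the concrete geometric statement $\pts{\rex{C}} \cap \low{s'} = \emptyset$ and apply it precisely to the endpoints whose degrees dropped. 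B1 is trivial, and B3 collapses to the two-case analysis above once one notes that edge removal cannot merge components or create new incidences.
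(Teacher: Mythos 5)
Your proof is correct and follows the same route as the paper, which disposes of this lemma with the single remark that B1--B3 are preserved when $\rex{C}$ is removed; you simply supply the details (non-emptiness, existence of $\rex{C}$ via the serializability of $\cfg$, and the case analysis for B2 and B3 using extremeness of $\rex{C}$), all of which check out.
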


\begin{figure}[t]
    \centering
    \begin{subfigure}{.30\linewidth}
        \centering
        \includegraphics[scale=0.5]{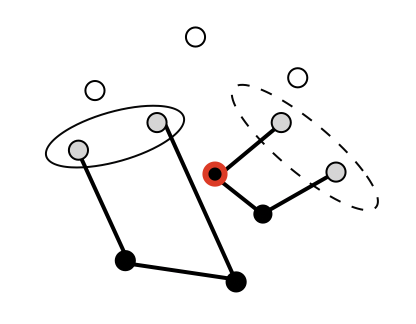}
    \end{subfigure}
    \begin{subfigure}{.30\linewidth}
        \centering
        \includegraphics[scale=0.5]{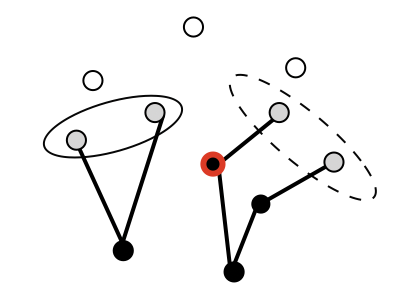}
    \end{subfigure}
    \caption{Two equivalent elements of $\prescy$.}
    \label{fig:prescy}
\end{figure}

We define an equivalence relation on $\prescy$.
For this purpose, we modify the definitions of $W, G$, and $B$ in \cref{sec:tree} so that they are the sets of points with degree 0, 1, and 2, respectively.
Note that all the points in $\low{C}$ are in $B$.
Moreover, we put a point $p$ in $\pts{C} \setminus \low{C}$ into $B$ if $p$ has degree 2.
$W$ is the same as \cref{sec:tree}: $W = P \setminus \pts{C}$.
As a result, $G$ contains an even number of points because the total degree must be even.
We define mark $m$ in the same way as \cref{sec:tree}.

By \cref{lem:prescy}, every $C \in \prescy$ is a disjoint union of paths such that every point in $G$ (or $B$) is an endpoint (or an internal point) of a path.
Therefore, to deal with connectivity, we define a matching $\mathcal{M}$ of points in $G$ such that, for every two different points $x, y \in G$, $x$ and $y$ are the two endpoints of a path in $C$ if and only if they are matched in $\mathcal{M}$.
In other words, $\mathcal{M}$ is a partition of set $G$ of gray points such that every set contains exactly two points.
% This is a usual technique for Hamiltonian cycle problem over tree decomposition~\cite{ZP18}.

% \begin{figure}[t]
%     \centering
%     \includegraphics[scale=0.5]{}
%     \caption{Two equivalent elements of $\prescy$.}
%     \label{fig:prescy}
% \end{figure}

We define $\psi(C) := (W, G, B, m, \mathcal{M})$ and the equivalence relation $\sim_{\psi}$ on $\prescy$ such that $C \sim_{\psi} C'$ if and only if $\psi(C) = \psi(C')$.
\cref{fig:prescy} shows two equivalent elements of $\prescy$.
In the figure, white, gray, and black points are the points in $W$, $G$, and $B$, respectively.
The point with a bold circle is $m$.
The ellipses indicate the matching $\mathcal{M}$ of $G$.

\begin{lemma}[*]\label{lem:psi_coherent}
    The equivalence relation $\sim_{\psi}$ on $\prescy$ is coherent.
    In addition, $\prescy$ is progressive on $\sim_{\psi}$.
\end{lemma}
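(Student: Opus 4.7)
The plan is to reduce the lemma to two functional claims about $\psi$: (i) whether $C \xrightarrow{s} C'$ holds depends only on $\psi(C)$ and $s$; and (ii) when it does, $\psi(C')$ is determined by $\psi(C)$ and $s$. These together yield coherence, and an inspection of the update rule gives progressiveness.

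For claim (i), the applicability of $C \xrightarrow{s} C'$ is the conjunction of ``$s$ is the right-most extreme of $C \cup \{s\}$'' and ``$C \cup \{s\} \in \prescy$'' (conditions B1--B3). The right-most extreme requirement depends only on $(W, G, B, m)$ by Wettstein's analysis of $\sim_\tau$ on $\cfg$ (Lemma~3 of \cite{Wet17}), and $\psi$ refines $\tau$. Condition B1 reduces to checking $\lft{s}, \rgt{s} \notin B$, visible from $\psi$. Condition B2 reduces to $\low{s} \cap W = \emptyset$: by extremality we already have $\low{s} \cap (G \cup B) = \emptyset$, and the intersection with $W$ depends on $P$ and $\psi$ alone. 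Condition B3 is the delicate one: adding $s$ creates a new cycle iff $\lft{s}$ and $\rgt{s}$ are matched in $\mathcal{M}$, and this cycle is isolated iff it does not cover $P$, which (since the only points on the cycle are $\lft{s}$, $\rgt{s}$, and the black interior of their path) happens iff $W \neq \emptyset$ or $|\mathcal{M}| > 1$. All these checks read off $\psi(C)$.

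For claim (ii), I give the update $\psi(C) \mapsto \psi(C')$ explicitly: $m$ becomes $\lft{s}$; each of $\lft{s}, \rgt{s}$ moves one step along $W \to G$ or $G \to B$; and $\mathcal{M}$ is updated by a local operation at the endpoints of $s$, namely splicing the (possibly trivial) path-ends at $\lft{s}$ and $\rgt{s}$ and recording the new matched pair of fresh endpoints, or deleting the pair altogether when a full spanning cycle has just been completed. Since both the applicability test and the updated tuple depend only on $(\psi(C), s)$, any $C_2 \sim_\psi C_1$ with $C_1 \xrightarrow{s} C_1'$ admits $C_2 \xrightarrow{s} C_2'$ with $\psi(C_1') = \psi(C_2')$, proving coherence. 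Progressiveness is then immediate: at least one endpoint of $s$ strictly changes class, so $(W, G, B)$, and hence $\psi$, changes.

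The main obstacle will be the case analysis for the matching update and B3 when $\lft{s}, \rgt{s} \in G$ are matched with each other: separating the forbidden creation of a small isolated cycle from the allowed completion of the full spanning cycle using only $\psi$-data, and verifying that the local $\mathcal{M}$-update is well-defined and yields identical output across all $C_2 \sim_\psi C_1$, is the step that will require the most care.
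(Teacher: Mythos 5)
Your overall strategy---show that both the applicability of $C \xrightarrow{s} C'$ and the resulting tuple $\psi(C')$ are functions of $(\psi(C), s)$ alone---is a clean functional repackaging of exactly what the paper does (the paper fixes $C_1 \sim_{\psi} C_2$ with $C_1 \xrightarrow{s} C_1'$, verifies B1--B3 for $C_2 \cup \{s\}$ using $C_1'$, matches the components of $\psi$ one by one, and gets progressiveness from the total degree increasing). Your B1, B3, matching-update, and progressiveness steps are all sound, and your B3 test ($W \neq \emptyset$ or $\size{\mathcal{M}} > 1$) is in fact stated more carefully than the paper's.

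However, your reduction of B2 is wrong as written. You claim that extremality of $s$ already gives $\low{s} \cap (G \cup B) = \emptyset$, but extremality of $s$ in $C \cup \{s\}$ means $s \not\sqsubset s'$ for all $s' \in C$, which constrains $\pts{s} \cap \low{C}$ and $\upp{s} \cap \pts{C}$---it says nothing about $\low{s} \cap \pts{C}$. A degree-one point $q \in G$ lying in $\low{s}$ is perfectly compatible with $s$ being (right-most) extreme, yet $C \cup \{s\}$ then violates B2 while passing your test $\low{s} \cap W = \emptyset$. The correct check is $\low{s} \subseteq B(C)$ (every point below $s$ must already have degree two), which is still readable from $\psi(C)$ and $s$, so your framework survives, but the stated characterization and its justification must be repaired. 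Relatedly, the appeal ``the right-most extreme requirement depends only on $(W,G,B,m)$ because $\psi$ refines $\tau$'' is imprecise: the colour classes of $\psi$ are degree-based, not shadow-based, so $B(\vec{\cdot})$ here is not $\low{C}$ and Wettstein's Lemma~3 does not apply verbatim. One needs the short direct argument the paper gives, namely that B2 forces $\low{C} \subseteq B \subseteq \pts{C} = G \cup B$ and B1 forces $\pts{s} \cap B = \emptyset$, from which extremality and right-most-ness of $s$ in $C \cup \{s\}$ are determined by $(W,G,B,m)$ and $s$.
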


The number of matchings of $N$ elements is $(N'-1)!! = (N'-1) \cdot (N'-3) \cdot \dots \cdot 1$ where $N'$ is the largest even number such that $N' \le N$.
Although this is slightly smaller than the number of general partitions of $N$ elements, it is still $\bigO{N^N}$.
This prevents us from obtaining the bound in the form $\ostar{c^n}$ for some constant $c$.
However, an appropriate analysis shows that there is a considerably smaller estimation on the number of possible $\mathcal{M}$'s.

By \cref{lem:non-crossing}, $\mathcal{M}$ is a non-crossing partition of $G$ whose every set contains exactly two elements.
Such a partition has a one-to-one correspondence with a balanced sequence of $|G|$ parentheses.
The correspondence is defined as follows.
Let $p'_1, \dots, p'_{|G|}$ are the points in $G$ ordered from left to right.
For two points $p'_i$ and $p'_j$ with $i < j$, if they are the two endpoints of the same path in $C$,
we put `(' and `)' in the $i$-th and $j$-th position of a sequence.
Since paths are pairwise non-crossing, we obtain a balanced sequence of parentheses in this way.
\cref{fig:parentheses} shows the correspondence between a matching of the endpoints of pairwise non-crossing paths and a balanced sequence of parentheses.
Using this one-to-one correspondence, we obtain the following lemma.
% Conversely, suppose that we are given a balanced sequence of $|G|$ parentheses.
% Consider the set of ${p'_i, p'_j}$ such that the $i$-th and $j$-th parentheses are matched.
% Then we obtain a partition of $G$ whose every set containing exactly two elements.

\begin{figure}[t]
    \centering
    \includegraphics[scale=0.5]{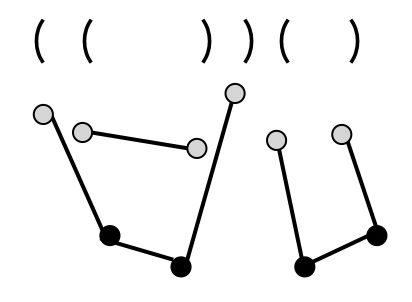}
    \caption{The correspondence between $\mathcal{M}$ and a balanced sequence of parentheses.}
    \label{fig:parentheses}
\end{figure}

\begin{lemma}\label{lem:psi_size}
    $\size{(\prescy / {\sim}_{\psi})} = \bigO{4^n n}$.
\end{lemma}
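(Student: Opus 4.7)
The plan is to mimic the proof of \cref{lem:phi_size} almost verbatim, with the crucial difference that the number of admissible $\mathcal{M}$'s for a fixed partition $(W,G,B)$ is controlled via the Catalan-number bound for balanced parentheses rather than for arbitrary non-crossing partitions. This will replace the factor $4^{|G|}$ by roughly $2^{|G|}$, and the multinomial identity $(1+2+1)^n = 4^n$ will yield the claimed bound.

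First I would fix a partition $(W,G,B)$ of $P$ and count the number of matchings $\mathcal{M}$ of $G$ that can arise as $\psi(C)$ for some $C \in \prescy$. By \cref{lem:non-crossing} (whose proof applied to the matching $\mathcal{M}$ — each pair in $\mathcal{M}$ being connected by a path in $C$, exactly as in the tree case — gives non-crossingness), $\mathcal{M}$ is a non-crossing partition of $G$ into pairs. If $|G|$ is odd then no such matching exists and the count is $0$; otherwise, via the one-to-one correspondence described just before the lemma (reading the points of $G$ left to right and writing `(' for the left endpoint and `)' for the right endpoint of each matched pair), such matchings are in bijection with balanced sequences of $|G|$ parentheses, of which there are $\catl{|G|/2} \le 4^{|G|/2} = 2^{|G|}$.

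Next I would sum over all ordered partitions $(W,G,B)$ of $P$, exactly as in \cref{lem:phi_size}, using the multinomial theorem:
\begin{align*}
    \sum_{\substack{W,G,B \subseteq P,\\ W \cup G \cup B = P,\\ W \cap G = G \cap B = W \cap B = \emptyset}} 2^{|G|}
    &= \sum_{i+j+k=n} \frac{n!}{i!\,j!\,k!} \cdot 1^i \cdot 2^j \cdot 1^k \\
    &= (1+2+1)^n = 4^n.
\end{align*}
Finally, multiplying by the at most $n$ choices for the mark $m$ gives $\size{(\prescy / {\sim_{\psi}})} = \bigO{4^n n}$.

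There is no real obstacle: the only subtlety is verifying that $\mathcal{M}$ is indeed non-crossing, but this follows by the same argument as \cref{lem:non-crossing} — two crossing matched pairs $(p'_i,p'_k)$ and $(p'_j,p'_l)$ with $i<j<k<l$ would force the corresponding two paths in $C$ to cross (one must pass under $p'_j$ and the other under $p'_k$, since gray points lie outside $\low{C}$), contradicting that $C$ is crossing-free. Once this is in place the rest is the same elementary counting used for $\str$.
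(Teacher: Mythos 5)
Your proposal is correct and follows essentially the same route as the paper: the bijection between $\mathcal{M}$ and balanced parenthesis sequences gives at most $2^{|G|}$ matchings per fixed $(W,G,B)$, the multinomial theorem yields $(1+2+1)^n = 4^n$, and the factor $n$ accounts for the mark $m$. Your extra care in justifying non-crossingness of $\mathcal{M}$ via the argument of \cref{lem:non-crossing} and in bounding the number of balanced sequences by $\catl{|G|/2} \le 2^{|G|}$ only makes explicit what the paper states in the text immediately preceding the lemma.
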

\begin{proof}
    From the one-to-one correspondence between the matching $\mathcal{M}$ and a balanced sequence of $|G|$ parentheses, the number of possible $\mathcal{M}$'s is $2^{|G|}$.
    Therefore, the number of $(W, G, B, \mathcal{M})$ is at most
    \begin{align}
        \sum_{\substack{W, G, B \subseteq P,\\ W \cup G \cup B = P,\\ W \cap G = G \cap B = W \cap B = \emptyset}} 2^{|G|}
        \quad = \sum_{i + j + k = n} \frac{n!}{i!j!k!} \cdot 1^i \cdot 2^j \cdot 1^k
        = (1 + 2 + 1)^n = 4^n.
    \end{align}
    Since the number of $m$ is at most $n$, it follows that $\size{(\mf{C}^\mathsf{sc} / {\sim}_{\psi})} = \bigO{4^n n}$.
\end{proof}

By \cref{lem:size,lem:prescy_serializable,lem:psi_coherent,lem:psi_size} and showing that the membership in $\prescy$ can be checked in $\bigO{n}$ time, we obtain the following
%
%lemma.
%
theorem.

\begin{theorem}[*]\label{theo:spanning_cycle}
    For any $P$, there exists a combination graph of size $\bigO{4^n n^3}$ that represents $\scy$.
    We can construct it in $\bigO{4^n n^4}$ time.
\end{theorem}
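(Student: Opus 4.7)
The plan is to assemble the four preceding lemmas into the size bound and then to upgrade this to a construction-time bound by exhibiting an $\bigO{n}$ membership test for $\prescy$. By \cref{lem:prescy_serializable}, $\prescy$ is serializable; by \cref{lem:psi_coherent}, $\sim_{\psi}$ is a coherent equivalence relation on $\prescy$ and $\prescy$ is progressive on $\sim_{\psi}$; and by \cref{lem:psi_size} the number of equivalence classes is $\bigO{4^n n}$. Plugging these into \cref{lem:size} with $\mf{T}$ chosen as the set of classes whose representatives lie in $\scy$ (a well-defined choice, since equivalent combinations share the same degree sequence) yields a combination graph of size $\bigO{\size{(\prescy/{\sim_{\psi}})}\cdot n^2} = \bigO{4^n n^3}$ representing $\scy$, exactly as claimed.

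For the construction I would proceed bottom-up, starting from the class $[\emptyset]$. For each already-generated class $[C]$ and each of the $\bigO{n^2}$ segments $s \in \sgm$, I would attempt the extension $C \cup \{s\}$, test whether $s = \rex{C \cup \{s\}}$ and $C \cup \{s\} \in \prescy$, and, when both hold, add an edge to the class $[C \cup \{s\}]$, creating it on demand using $\psi(C \cup \{s\})$ as a hash key. Serializability combined with coherence of $\sim_{\psi}$ ensures that this traversal visits every reachable class exactly once, so the total work is $\bigO{4^n n}$ classes times $\bigO{n^2}$ segment probes times the per-probe cost.

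The main obstacle is therefore the per-probe cost, for which I need to show that membership in $\prescy$ can be decided in $\bigO{n}$ time. With each class I would store the tuple $(W, G, B, m, \mathcal{M})$ together with the current degree of every point. Condition B1 of \cref{lem:prescy} is then a constant-time degree check at $\lft{s}$ and $\rgt{s}$. Condition B2 reduces to scanning $\low{s}$ (of size at most $n$) and verifying that every point it contains already has degree $2$, and symmetrically checking that $\lft{s}$ and $\rgt{s}$, once they possibly transition into $B$, do not newly hide any point of degree less than $2$; both take $\bigO{n}$ time. Condition B3 requires detecting whether inserting $s$ closes an isolated cycle, which happens precisely when $\lft{s}$ and $\rgt{s}$ are paired in $\mathcal{M}$ and the cycle so formed omits at least one point of $P$; using $\mathcal{M}$ this is a constant-time lookup followed by comparison with $|G|$ and $|W|$. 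The concrete update rules for $(W, G, B, m, \mathcal{M})$ accompanying each of these checks are routine and deferred to the appendix, mirroring the treatment of $\prestr$ in \cref{app:spanning_tree}. Multiplying the $\bigO{n}$ per-probe cost by the size bound yields the construction time $\bigO{4^n n^4}$.
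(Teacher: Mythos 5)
Your proposal matches the paper's own proof in both structure and substance: the size bound comes from combining \cref{lem:size}, \cref{lem:prescy_serializable}, \cref{lem:psi_coherent}, and \cref{lem:psi_size}, and the time bound reduces to an $\bigO{n}$ membership test for $\prescy$ that checks B1 via degrees, B2 by scanning $\low{s}$, and B3 via a lookup in $\mathcal{M}$, exactly as the paper does. The only cosmetic difference is that your extra ``symmetric'' B2 check at the endpoints of $s$ is unnecessary (since $C$ already satisfies B2 and the only new lower-shadow points are those in $\low{s}$), but this does not affect correctness or the stated bounds.
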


\section{Optimizing the area of crossing-free spanning cycles}\label{sec:area}
In this section, we propose an algorithm to optimize the area surrounded by crossing-free spanning cycles.

Given a combination graph $\Gamma$ whose edge labels are in $\sgm$, by \cref{lem:query}, one can optimize a linear function of $\sgm$ in $\bigO{\size{\Gamma}}$ time.
One example of such a function is the sum of lengths of segments.
Therefore, we can find a crossing-free spanning cycle with minimum (or maximum) length in $\bigO{4^n n^4}$ time by \cref{theo:spanning_cycle}.
At first glance, the area does not seem to be a linear function of $\sgm$.
However, using a well-known technique in computational geometry, we can express the area as a linear function of \emph{directed segments}.
Therefore, to optimize the area, we construct a combination graph whose edge labels are directed segments.
% This adds some overheads to the algorithm for crossing-free spanning cycles, which was shown in \cref{theo:spanning_cycle}.

% Introduction to Algorithms, 3rd edition, exercise 33.1-8
% あるいは微積分学のグリーンの定理の特殊化

% 単純多角形の面積＝線分ごとの三角形の符号付き面積の和（or 台形の面積和 by グリーンの定理）
% 多角形の向きを反時計周り（＋）に限るためには，
% p_1に接続する2つの線分のうち，下が出る向き，上が入る向きにすればよい
% そのためには，p_1に対し，入次数1の状態を禁止すれば良い
% (right-most extension規則により，下の線分から採用されるので）

The following lemma is a well-known technique to calculate the area of a simple polygon, and thus a crossing-free spanning cycle.

\begin{lemma}[Green's theorem, or Exercise 33.1-8 in~\cite{CLRS09}]\label{lem:green}
    Let $S$ be a simple polygon with $N$ vertices ordered in counter-clockwise order as $(x_1, y_1), \dots, (x_N, y_N)$.
    For convenience, we define $(x_{N+1}, y_{N+1}) = (x_1, y_1)$.
    The area $A(S)$ of $S$ is
    \begin{equation}\label{eq:green}
        A(S) = \sum_{i = 1}^{N} \frac{(x_{i+1} + x_{i}) (y_{i+1} - y_{i})}{2}.
    \end{equation}
\end{lemma}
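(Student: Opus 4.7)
The plan is to derive \eqref{eq:green} from Green's theorem in its line-integral form, which gives a clean expression for the signed area that decomposes naturally over the edges of the polygon. Specifically, I would start from the identity
\begin{equation*}
    A(S) \;=\; \iint_{S} dA \;=\; \oint_{\partial S} x \, dy,
\end{equation*}
where $\partial S$ is traversed counter-clockwise; this is the standard application of Green's theorem to the vector field $(0, x)$.

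Next, I would split the line integral into the $N$ edge contributions. For the $i$-th edge, parameterize by $t \in [0,1]$ via $x(t) = (1-t) x_i + t x_{i+1}$ and $y(t) = (1-t) y_i + t y_{i+1}$, so $dy = (y_{i+1} - y_i)\,dt$. A direct evaluation gives
\begin{equation*}
    \int_{0}^{1} \bigl((1-t) x_i + t x_{i+1}\bigr)(y_{i+1} - y_i) \, dt
    \;=\; \frac{(x_{i+1} + x_i)(y_{i+1} - y_i)}{2},
\end{equation*}
and summing over $i = 1, \dots, N$ yields the claimed formula.

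The main obstacle is that Green's theorem in its differential form is usually stated for smooth (or piecewise-smooth) domains, whereas a simple polygon has corners and no a priori regularity beyond continuity of the boundary. To handle this cleanly, I would triangulate $S$ (which is possible for any simple polygon) and apply the identity to each triangle, noting that interior diagonals are traversed once in each direction and therefore cancel, so the contributions reduce to the polygon's boundary edges. The triangle case is elementary: one can verify directly from the base-times-height formula that \eqref{eq:green} agrees with the signed area of a triangle given in counter-clockwise order. Alternatively, a purely combinatorial proof by induction on $N$ via ear-clipping works: the base case $N = 3$ is a direct computation, and the inductive step shows that clipping an ear modifies both sides of \eqref{eq:green} by exactly the area of the removed triangle, since the two terms of the ear's old edges are replaced by the single term of the new diagonal in a way that telescopes correctly.
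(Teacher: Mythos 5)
Your proof is correct. The paper does not actually prove this lemma---it cites it as a known result (Green's theorem / CLRS Exercise 33.1-8) and only offers the informal interpretation that each summand is the signed area of the trapezoid spanned by the $i$-th edge and the $y$-axis; your derivation via $A(S)=\oint_{\partial S}x\,dy$ with the linear edge parameterization (and the triangulation or ear-clipping fallback to handle the non-smooth boundary rigorously) is a correct and complete version of exactly that argument, so nothing further is needed.
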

\cref{eq:green} calculates $A(S)$ as the sum of the signed area of the trapezoid defined by each edge.
The $i$-th edge of $S$ defines the trapezoid whose vertices are $(x_i, y_i)$, $(x_{i+1}, y_{i+1})$, $(0, y_{i+1})$, and $(0, y_{i})$.
The area of the trapezoid is positive if the vertices are ordered counter-clockwise and negative otherwise.

Let $\dsg$ be the set of directed segments whose endpoints are in $P$.
In other words, $\dsg := \bigcup_{\set{a, b} \in \sgm} \set{(a, b), (b, a)}$.
For a point $p \in P$, $(x_p, y_p)$ denotes its coordinates.
We call $\vec{C} \subseteq \dsg$ a \emph{counter-clockwise (crossing-free) spanning cycle} if $\vec{C}$ satisfies the following: 1) $(a, b) \in \vec{C}$ implies $(b, a) \notin \vec{C}$, 2) the set $\inset{\set{a, b}}{(a, b) \in \vec{C}}$ of segments is a crossing-free spanning cycle, and 3) the segments in $\vec{C}$ are directed in counter-clockwise order.
The \emph{area}, denoted by $A(\vec{C})$, of $\vec{C}$ means the enclosed area by the polygon defined by $\vec{C}$.
Then, \cref{eq:green} can be written as
\begin{equation}\label{eq:green_segments}
    A(\vec{C}) = \sum_{(a, b) \in \vec{C}} \frac{(x_{b} + x_{a}) (y_{b} - y_{a})}{2}.
\end{equation}
\cref{eq:green_segments} expresses the area of a  counter-clockwise spanning cycle as a linear function of directed segments.

On the basis of the above discussion, we construct a combination graph $\Gamma'$ representing $\dsc \subseteq 2^{\dsg}$, where $\dsc$ denote the set of counter-clockwise spanning cycles on $P$.
The edges of $\Gamma'$ are labeled by directed segments in $\dsg$.
We associate each directed segment $\vec{s} = (a, b) \in \dsg$ with weight $c(\vec{s}) := (x_{b} + x_{a}) (y_{b} - y_{a}) / 2$.
Then, the sum of the weights in a $\bot$-$\top$ path in $\Gamma'$ is the area of the counter-clockwise spanning cycle represented by the path.
Therefore, to find an area-minimized (or maximized) counter-clockwise spanning cycle, it suffices to find a $\bot$-$\top$ path with minimum (or maximum) weight in $\Gamma'$.
It can be found in $\bigO{\size{\Gamma'}}$ time by \cref{lem:query}.

To construct $\Gamma'$, we focus on the property of prefixes of a counter-clockwise spanning cycle.
For a directed segment $\vec{s} = (a, b) \in \dsg$, we call $a$ the \emph{head} and $b$ the \emph{tail} of $\vec{s}$.
The \emph{in-degree} (or \emph{out-degree}) of a point $p$ in $\vec{C} \subseteq \dsg$ is the number of directed segments whose tail (or head) is $p$, denoted by $\indegr{p}{\vec{C}}$ (or $\outdegr{p}{\vec{C}}$).
The following lemma is obtained in a similar way as \cref{lem:prescy}.
\begin{lemma}\label{lem:predsc}
    Let $\vec{C} \subseteq \dsg$ a set of crossing-free directed segments.
    If $\vec{C}$ is a prefix of a counter-clockwise spanning cycle, then all of the following hold:
    \begin{enumerate}
        \item[C1.] $\indegr{p}{\vec{C}} \le 1$ and $\outdegr{p}{\vec{C}}\le 1$ for every point $p$,
        \item[C2.] for every point $p$ such that $\indegr{p}{\vec{C}} = 0$ or $\outdegr{p}{\vec{C}} = 0$, $\vec{C}$ has no directed segment $\vec{s}$ such that $p \in \low{\vec{s}}$, and
        \item[C3.] there are no isolated cycles in $\vec{C}$.
    \end{enumerate}
\end{lemma}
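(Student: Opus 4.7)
The plan is to mirror the proof of \cref{lem:prescy} almost verbatim, replacing the undirected degree with the pair $(\indegr{p}{\vec{C}}, \outdegr{p}{\vec{C}})$, and to handle each of the three conditions with its own short argument. Throughout, the key tool is the observation used repeatedly in the earlier prefix lemmas: if $\vec{s} \in \vec{C}$ and $p \in \low{\vec{s}}$, then any extension $\vec{C}'$ of $\vec{C}$ still contains $\vec{s}$, and hence $\rex{\vec{C}'}$ cannot be incident to $p$ (otherwise $\rex{\vec{C}'} \depend \vec{s}$, contradicting extremality). So no segment incident to $p$ is ever added after such a blocking $\vec{s}$ appears.

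For C1 I would argue directly from the target object: in a counter-clockwise spanning cycle $\vec{C}^{*}$, every point has exactly in-degree $1$ and out-degree $1$, since each vertex is entered once and left once. In-degrees and out-degrees only grow under extension, so any prefix $\vec{C}$ of $\vec{C}^{*}$ must already satisfy $\indegr{p}{\vec{C}} \le 1$ and $\outdegr{p}{\vec{C}} \le 1$. For C2, suppose some point $p$ has $\indegr{p}{\vec{C}} = 0$ (the out-degree case is symmetric) and there exists $\vec{s} \in \vec{C}$ with $p \in \low{\vec{s}}$. By the blocking observation above, no segment incident to $p$ can appear in any extension of $\vec{C}$, so $\indegr{p}{\vec{C}'} = 0$ in every extension $\vec{C}'$; hence $\vec{C}$ cannot extend to any counter-clockwise spanning cycle, a contradiction.

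For C3, suppose $U \subsetneq P$ is an isolated cycle in $\vec{C}$, so every $p \in U$ already has $\indegr{p}{\vec{C}} = \outdegr{p}{\vec{C}} = 1$. If $\vec{C}$ extends to some counter-clockwise spanning cycle $\vec{C}^{*}$, then since $U \subsetneq P$, the graph $\vec{C}^{*}$ must contain at least one directed segment joining $U$ to $P \setminus U$; that segment is incident to some $p \in U$, forcing either $\indegr{p}{\vec{C}^{*}} \ge 2$ or $\outdegr{p}{\vec{C}^{*}} \ge 2$, contradicting $\vec{C}^{*} \in \dsc$.

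I do not expect a real obstacle here, since each item is a direct translation of its undirected counterpart in \cref{lem:prescy}. The only mild subtlety is in C2: one must treat the in-degree and out-degree cases together, because the blocking argument is purely about \emph{some} segment incident to $p$ being forbidden in all extensions, whereas a priori we might worry that we could still add a segment into $p$ while the missing one was outgoing (or vice versa). The resolution is that the blocking applies to \emph{every} future segment incident to $p$ regardless of orientation, so whichever of $\indegr{p}{\vec{C}}, \outdegr{p}{\vec{C}}$ is currently $0$ stays $0$ forever, and that alone prevents $\vec{C}$ from extending to an element of $\dsc$.
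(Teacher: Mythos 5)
Your proof is correct and takes essentially the same route as the paper, which gives no separate argument for this lemma but states that it ``is obtained in a similar way as'' \cref{lem:prescy}; your three arguments (monotonicity of in/out-degrees for C1, the shadow-blocking of $\rex{\vec{C}'}$ for C2, and the forced boundary segment of an isolated cycle for C3) are exactly the directed translations of the paper's proof of that undirected version. Nothing further is needed.
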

In the following, $\predsc$ denotes the set of combinations of $\dsg$ that satisfy the conditions from C1 to C3.
Note that all the conditions from C1 to C3 are maintained when removing $\rex{\vec{C}}$ from any $\vec{C} \in \predsc$, which shows the following lemma.
\begin{lemma}\label{lem:predsc_serializable}
    For any point set $P$, the set $\predsc$ is serializable.
\end{lemma}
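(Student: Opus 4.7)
The plan is to verify the three clauses in the definition of serializability (Definition 7) for $\predsc$: non-emptiness, existence of a right-most extreme element in every non-empty member, and closure of $\predsc$ under removal of that element. Non-emptiness is immediate because the empty combination vacuously satisfies C1, C2, and C3. For the existence of $\rex{\vec{C}}$ when $\vec{C} \in \predsc$ is non-empty, I would observe that the underlying set of undirected segments $\{\{a,b\} : (a,b) \in \vec{C}\}$ is a non-empty crossing-free combination, hence lies in $\cfg$; applying \cref{lem:cf_serialize} yields a right-most extreme element of that undirected set, which lifts uniquely to a directed segment $\rex{\vec{C}} \in \vec{C}$ because C1 forbids $\vec{C}$ from containing both $(a,b)$ and $(b,a)$.

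The bulk of the work is to show that removing $\rex{\vec{C}}$ preserves C1, C2, and C3. Clause C1 is immediate since deleting a directed segment can only decrease in- and out-degrees. Clause C3 is preserved by a component-level argument: suppose some $U \subsetneq P$ were an isolated cycle in $\vec{C} \setminus \{\rex{\vec{C}}\}$, so that every point of $U$ has total degree two in the smaller combination. Reinstating $\rex{\vec{C}}$, any endpoint of $\rex{\vec{C}}$ lying in $U$ would acquire degree three in $\vec{C}$, contradicting C1 on $\vec{C}$; and if no endpoint of $\rex{\vec{C}}$ lies in $U$, then $U$ is still a connected component with every vertex of degree two in $\vec{C}$, i.e., an isolated cycle in $\vec{C}$, contradicting C3 on $\vec{C}$.

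The main obstacle is C2, and this is precisely where extremality of $\rex{\vec{C}}$ enters. Write $\rex{\vec{C}} = (a,b)$. The only points whose in- or out-degree can drop from one to zero upon removal are the head $a$ and the tail $b$; for every other point $p$, the hypothesis of C2 is unchanged and its conclusion is inherited from $\vec{C}$ since we have only deleted one segment. It therefore suffices to show that no $\vec{s}' \in \vec{C} \setminus \{\rex{\vec{C}}\}$ has $a$ or $b$ in its lower shadow. This is exactly what extremality supplies: by definition, $\rex{\vec{C}} \not\depend \vec{s}'$ for every such $\vec{s}'$, which unfolds to $\pts{\rex{\vec{C}}} \cap \low{\vec{s}'} = \emptyset$, i.e., $a,b \notin \low{\vec{s}'}$. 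Hence C2 is preserved, and assembling the three preservation arguments completes the verification that $\predsc$ is serializable.
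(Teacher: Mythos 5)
Your proposal is correct and follows essentially the same route as the paper, whose own argument is just the terse observation that C1--C3 are preserved when $\rex{\vec{C}}$ is removed; your C2 verification via the extremality condition $\pts{\rex{\vec{C}}} \cap \low{\vec{s}'} = \emptyset$ is exactly the intended reason. One small slip: C1 alone does not forbid $\vec{C}$ from containing both $(a,b)$ and $(b,a)$ (each endpoint would then have in-degree and out-degree one, which C1 permits); what excludes this configuration is C3, since under C1 such a pair must form an isolated $2$-cycle.
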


We define an equivalence relation on $\predsc$ as follows.
For $\vec{C} \in \predsc$, we define $W(\vec{C})$, $G(\vec{C})$, and $B(\vec{C})$ in almost the same way as \cref{sec:cycle}.
More precisely, $W(\vec{C})$ is the set of points that have both indegree and outdegree 0.
The set $G(\vec{C})$ consists of points that have indegree 1 and outdegree 0, or in-degree 0 and out-degree 1.
$B(\vec{C})$ is the set of points with indegree 1 and outdegree 1.
We define $m(\vec{C})$ in the same way as \cref{sec:cycle}.

By \cref{lem:predsc}, every $\vec{C} \in \predsc$ is a disjoint union of \emph{directed} paths.
Therefore, to deal with connectivity and the directions of the paths, we define $\vec{\mathcal{M}}(\vec{C})$ as the directed variant of $\mathcal{M}$ in \cref{sec:cycle}.
$\vec{\mathcal{M}}$ is the set of pairs $(a, b)$ of $a, b \in P$ such that 1) every $p \in P$ appears in exactly one pair in $\vec{\mathcal{M}}$ and, 2) for every two points $x, y \in G$, $x$ and $y$ are respectively the head and the tail of the same path if and only if $(x, y)$ is in $\vec{\mathcal{M}}$.

We define $\chi(\vec{C}) := (W, G, B, m, \vec{\mathcal{M}})$ and the equivalence relation $\vec{C} \sim_{\chi} \vec{C'}$ if and only if $\chi(\vec{C}) = \chi(\vec{C'})$.
We can prove the following lemma in almost the same way as \cref{lem:psi_coherent}.

\begin{lemma}
    The equivalence relation $\sim_{\chi}$ on $\predsc$ is coherent.
    In addition, $\predsc$ is progressive on $\sim_{\chi}$.
\end{lemma}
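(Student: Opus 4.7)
The plan is to mirror the proof of \cref{lem:psi_coherent} step by step, replacing undirected segments and the matching $\mathcal{M}$ by directed segments and the directed matching $\vec{\mathcal{M}}$, and to verify that the extra information recorded in $\vec{\mathcal{M}}$ (namely, which endpoint of each path is the head and which is the tail) is still a function of $\chi(\vec{C})$ alone. Throughout, I will use the fact that conditions C1--C3 are ``local'' in the sense that their validity for $\vec{C}\cup\{\vec{s}\}$ depends only on $(W,G,B,m,\vec{\mathcal{M}})$ together with $\vec{s}$.

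For coherence, suppose $\vec{C_1}\sim_\chi \vec{C_2}$ and $\vec{C_1}\xrightarrow{\vec{s}}\vec{C'_1}$. First, I would show that $\vec{s}$ is extreme in $\vec{C_2}\cup\{\vec{s}\}$: the only way this could fail is if some $\vec{t}\in\vec{C_2}$ shadows $\vec{s}$ from below or if $\pts{\vec{s}}$ meets $\low{\vec{t}}$, but these conditions are determined by $B(\vec{C_2}) = B(\vec{C_1})$ and by the locations of $\lft{\vec{s}},\rgt{\vec{s}}$ together with the constraint that $\rex{}$ is the rightmost extreme segment. Next, I would verify C1--C3 for $\vec{C'_2}:=\vec{C_2}\cup\{\vec{s}\}$. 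C1 and C2 are immediate from the fact that $W,G,B$ encode the precise in/out-degree status of every point and change identically under the addition of $\vec{s}$. For C3, the only danger is that $\vec{s}$ closes an isolated directed cycle in $\vec{C_2}$; but this occurs iff $\vec{s}=(a,b)$ with $(b,a)\in\vec{\mathcal{M}}(\vec{C_2})$ and $\pts{\vec{s}}\cap (W\cup G)=\emptyset$ would be wrong... rather, this occurs iff $\vec{s}$ joins the head of a path to the tail of the same path, which by $\vec{\mathcal{M}}(\vec{C_1})=\vec{\mathcal{M}}(\vec{C_2})$ happens for $\vec{C_2}$ iff it happens for $\vec{C_1}$, and the latter is ruled out by $\vec{C'_1}\in\predsc$. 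Finally, I would show $\chi(\vec{C'_1})=\chi(\vec{C'_2})$: the quintuple $(W,G,B,m)$ updates identically, and $\vec{\mathcal{M}}$ updates by the same local rule, namely removing the pairs that had $\lft{\vec{s}}$ or $\rgt{\vec{s}}$ as an endpoint and inserting the concatenated directed path (if any) according to whether $\vec{s}$ extends a path, merges two paths head-to-tail, or starts a new single-edge path.

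For progressiveness, given $\vec{C}\xrightarrow{\vec{s}}\vec{C'}$, adding the directed segment $\vec{s}$ strictly increases either $\indegr{\rgt{\vec{s}}}{\cdot}$ or $\outdegr{\lft{\vec{s}}}{\cdot}$ (and symmetrically for the other endpoint), which forces at least one endpoint of $\vec{s}$ to change color (from $W$ to $G$ or from $G$ to $B$). Hence $(W,G,B)$ strictly changes, and so $\chi(\vec{C})\neq\chi(\vec{C'})$.

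The main obstacle I anticipate is the bookkeeping for $\vec{\mathcal{M}}$ in the coherence argument when $\vec{s}$ merges two distinct directed paths: one must check that, because $\vec{s}$ has a prescribed orientation, the unique consistent way to concatenate the two paths (head of one into tail of the other) is determined entirely by $\vec{\mathcal{M}}(\vec{C})$ together with the head/tail pattern at $\lft{\vec{s}}$ and $\rgt{\vec{s}}$, which in turn is visible from the local in/out-degree counts recorded in $W,G,B$. Once this directional bookkeeping is settled, the rest of the proof is a direct transcription of the undirected argument.
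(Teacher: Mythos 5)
Your proposal is correct and follows exactly the route the paper intends: the paper gives no standalone proof of this lemma, saying only that it is proved ``in almost the same way as \cref{lem:psi_coherent},'' and your step-by-step transcription of that proof with the directed bookkeeping for $\vec{\mathcal{M}}$ is precisely that argument. One minor misattribution: the head/tail status of a gray point (and hence its exact in/out-degrees) is read off from $\vec{\mathcal{M}}$, not from $(W,G,B)$ alone, but since $\sim_{\chi}$ preserves $\vec{\mathcal{M}}$ as well, this does not affect the validity of your argument.
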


To enforce counter-clockwise order, we focus on the leftmost point $p_1$.
For a directed (clockwise or counter-clockwise) spanning cycle $\vec{C}$, let $u^\mathrm{in}$ be the unique segment of $\vec{C}$ whose tail is $p_1$ and $u^\mathrm{out}$ be the one whose head is $p_1$.
The segments in $\vec{C}$ are directed counter-clockwise if and only if $u^\mathrm{in}$ is above $u^\mathrm{out}$.
By the rule of extension, $u^\mathrm{out}$ is adopted before $u^\mathrm{in}$.
Therefore, $p_1$ never be the tail of some path.
This occurs if and only if $\vec{\mathcal{M}}$ contains a pair $(q, p_1)$ for some $q \neq p_1$.
We prune such cases.
It affects the time complexity of the algorithm only in a constant factor and does not enlarge the number of equivalence classes.

We analyze the overhead by maintaining the directions of the paths.
Since there exists at most $n/2$ paths, the bound $4^n \cdot 2^{n/2} = (4\sqrt{2})^n < 5.659^n$ is easy to obtain.
The following lemma shows that the bound can be further reduced.

\begin{lemma}
    $\size{(\vec{\mf{C}}^\mathsf{st} / {\sim}_{\chi})} = \bigO{c^n n}$ for some constant $c < 4.829$.
\end{lemma}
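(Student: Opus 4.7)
The plan is to adapt the proof of \cref{lem:psi_size} by decoupling each directed matching $\vec{\mathcal{M}}$ into an underlying non-crossing matching of $G$, as counted in the undirected case, together with a choice of orientation for each matched pair. Since the underlying undirected matching is still forced to be non-crossing by the same argument as in \cref{lem:non-crossing} (crossings of the paths inside $C$ would contradict crossing-freeness, independently of directions), the undirected part contributes at most $2^{|G|}$ via the balanced-parentheses bijection used in \cref{lem:psi_size}.

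Next, I would observe that once the non-crossing matching is fixed, each of the $|G|/2$ pairs can be independently oriented in two ways (either endpoint may serve as the head of its directed path), which multiplies the count by $2^{|G|/2}$. Putting these two ingredients together gives an upper bound of $2^{|G|}\cdot 2^{|G|/2}=(2\sqrt{2})^{|G|}$ choices for $\vec{\mathcal{M}}$ once $(W,G,B)$ is fixed. Summing over all triples $(W,G,B)$ via the multinomial theorem yields
\begin{align*}
\sum_{\substack{W,G,B\subseteq P,\\ W\cup G\cup B=P,\\ W\cap G=G\cap B=W\cap B=\emptyset}} (2\sqrt{2})^{|G|}
&= \sum_{i+j+k=n}\frac{n!}{i!\,j!\,k!}\cdot 1^{i}\cdot(2\sqrt{2})^{j}\cdot 1^{k}\\
&= \bigl(1+2\sqrt{2}+1\bigr)^{n}=\bigl(2+2\sqrt{2}\bigr)^{n}.
\end{align*}
Multiplying by the $\bigO{n}$ choices for the mark $m$ gives $\size{(\predsc/{\sim_{\chi}})}=\bigO{(2+2\sqrt{2})^{n}n}$, and since $2+2\sqrt{2}=4.8284\ldots<4.829$, the claimed bound with $c<4.829$ follows. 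The counter-clockwise pruning described before the statement only removes equivalence classes and therefore cannot increase the count, so it may be ignored for the upper bound.

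The main obstacle I anticipate is justifying that the orientations of different matched pairs are genuinely independent, i.e.\ that every pair (non-crossing matching, orientation assignment) is actually attained by some $\vec{C}\in\predsc$ and, conversely, that no additional identifications among such choices arise from the equivalence $\sim_{\chi}$. The first direction should follow from an explicit construction that realizes any desired non-crossing matching of $G$ with any prescribed orientations by a disjoint union of directed paths compatible with $(W,G,B)$; the second direction is immediate from the definition of $\chi$ since $(W,G,B,m,\vec{\mathcal{M}})$ already records the orientation data. Once these verifications are in place, the count above is a clean upper bound and the theorem follows.
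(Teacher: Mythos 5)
Your proposal is correct and follows essentially the same route as the paper: bound the number of choices of $\vec{\mathcal{M}}$ for fixed $(W,G,B)$ by $2^{|G|}\cdot 2^{|G|/2}=(2\sqrt{2})^{|G|}$ (non-crossing matching times per-pair orientations), apply the multinomial theorem to get $(1+2\sqrt{2}+1)^n$, and multiply by $\bigO{n}$ for the mark $m$. The extra verification you flag about independence of orientations is not needed for an upper bound, since $\chi$ determines the tuple $(W,G,B,m,\vec{\mathcal{M}})$ injectively on equivalence classes; the paper likewise omits it.
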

\begin{proof}
    When the number of points in $G$ is $j$, the number of paths is $j / 2$. (Note that $j$ is even.)
    Thus, the number of directions of the paths is $2^{j/2}$.
    Therefore, the number of $(W, G, B, \vec{\mathcal{M}})$ is at most
    \begin{align*}
        \sum_{\substack{W, G, B \subseteq P,\\ W \cup G \cup B = P,\\ W \cap G = G \cap B = W \cap B = \emptyset}} 2^{|G|} \cdot 2^{|G| / 2}
        &\quad = \sum_{i + j + k = n} \frac{n!}{i!j!k!} \cdot 1^i \cdot (2\sqrt{2})^j \cdot 1^k \\
        &\quad = (1 + 2\sqrt{2} + 1)^n = \bigO{c^n},
    \end{align*}
    where $c$ is a constant less than $4.829$.
    Since the number of $m$ is at most $n$, we obtain $\size{(\vec{\mf{C}}^\mathsf{st} / {\sim}_{\chi})} = \bigO{c^n n}$.
\end{proof}

Finally, we obtain the following theorem in the same way as \cref{lem:psi_coherent}.

\begin{theorem}\label{theo:directed_cycle}
    For any $P$, there exists a combination graph $\Gamma'$ of size $\bigO{c^n n^3}$ that represents $\dsc$ over $\dsg$, where $c$ is a constant less than $4.829$.
    We can construct it in $\bigO{c^n n^4}$ time.
    Within the same time bound, we can find an area-minimized (or maximized) crossing-free spanning cycle.
\end{theorem}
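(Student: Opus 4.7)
The plan is to assemble Theorem~\ref{theo:directed_cycle} by chaining the structural lemmas already established for $\predsc$ and $\sim_{\chi}$ with the generic framework bound from \cref{lem:size}, and then derive the optimization claim from the weight assignment induced by Green's theorem and \cref{lem:query}.

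First I would invoke \cref{lem:predsc_serializable} (serializability of $\predsc$), the coherence/progressiveness of $\sim_{\chi}$, and the size bound $\size{(\vec{\mf{C}}^\mathsf{st}/{\sim_{\chi}})} = \bigO{c^n n}$ with $c < 4.829$. Plugging these into \cref{lem:size} with $\mf{T} := (\predsc / {\sim_{\chi}}) \cap \dsc$ yields a combination graph $\Gamma'$ of size $\bigO{c^n n \cdot n^2} = \bigO{c^n n^3}$ that represents $\dsc$ (after the pruning step described above Section~\ref{sec:area}, which removes equivalence classes whose matching $\vec{\mathcal{M}}$ contains a pair of the form $(q,p_1)$; this only prunes vertices, so it does not enlarge the size bound).

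Second, I would address the construction time. Following the same pattern as for \cref{theo:spanning_tree} and \cref{theo:spanning_cycle}, constructing $\Gamma'$ amounts to enumerating every equivalence class, iterating over all $\bigO{n^2}$ directed segments in $\dsg$, and testing whether the extended combination still lies in $\predsc$. The only nontrivial step is the membership test: checking C1 is immediate from the partition $(W,G,B)$, checking C2 reduces to verifying that the shadow of the newly added directed segment contains no point of $W \cup G$ (done in $\bigO{n}$ time by sweeping through $\upp{\vec{s}}$ and $\low{\vec{s}}$), and checking C3 reduces to verifying, after updating the matching $\vec{\mathcal{M}}$, that no pair is closed into an isolated directed cycle unless it spans all of $P$. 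All of these tests fit within $\bigO{n}$ time, giving the total $\bigO{c^n n^4}$ bound. I expect this bookkeeping to be the main obstacle, since C3 in the directed setting requires careful use of $\vec{\mathcal{M}}$ to recognize when appending $\vec{s}$ would join the two endpoints of the same directed path prematurely; the cleanest way is to maintain, for each point in $G$, a pointer to its partner in $\vec{\mathcal{M}}$ and reject an extension precisely when the new segment would match a head to the tail of its own path without exhausting $P$.

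Finally, for the optimization statement, I would weight each edge of $\Gamma'$ labeled by the directed segment $\vec{s} = (a,b)$ with the trapezoidal contribution $c(\vec{s}) = (x_b + x_a)(y_b - y_a)/2$ from \cref{eq:green_segments}. Because the only counter-clockwise spanning cycles represented by $\Gamma'$ are precisely the $\bot$-$\top$ paths, and because by \cref{eq:green_segments} the area of the corresponding polygon equals the sum of the weights along that path, area minimization (respectively maximization) over $\dsc$ is identical to finding a minimum- (respectively maximum-) weight $\bot$-$\top$ path in $\Gamma'$. By \cref{lem:query}, this takes $\bigO{\size{\Gamma'}} = \bigO{c^n n^3}$ time, which is absorbed into the $\bigO{c^n n^4}$ construction time, yielding the claimed bound.
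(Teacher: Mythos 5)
Your proposal is correct and follows essentially the same route as the paper: combine the serializability of $\predsc$, the coherence/progressiveness of $\sim_{\chi}$, and the $\bigO{c^n n}$ bound on equivalence classes with \cref{lem:size}, add the $\bigO{n}$-time membership test for the construction bound, and reduce area optimization to a min/max-weight $\bot$-$\top$ path via \cref{eq:green_segments} and \cref{lem:query}. The only quibbles are cosmetic: $\mf{T}$ should be described as the set of equivalence classes whose members are counter-clockwise spanning cycles (determined by $W=G=\emptyset$, $B=P$ together with the $(q,p_1)$-pruning) rather than an intersection with $\dsc$, and the C2 test only needs $\low{\vec{s}} \cap (W \cup G) = \emptyset$ (the $\upp{\vec{s}}$ check belongs to the extremeness condition already handled by the framework).
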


\section{Concluding remarks}
In this paper, we presented algorithms to compile and optimize connected crossing-free geometric graphs on
%
%a given point set in the plane.
%
a set of $n$ points in the plane.
Our compilation algorithms run in $\ostar{6^n}$ time for crossing-free spanning trees and $\ostar{4^n}$ time for crossing-free spanning cycles.
In addition, we can find an area-minimized (or maximized) crossing-free spanning cycle in $\ostar{4.829^n}$ time.

One future direction is applying our technique to other geometric graphs with connectivity constraints.
Since our technique is simple, we believe that we can easily adapt it for other geometric graphs such as spanning forests, connected graphs (not necessarily spanning), spanning connected graphs (not necessarily cycle-free), and so on.

Another direction is improving our analysis of the complexity.
In this paper, we first defined an equivalence relation using the partition (or matching) of the points and then improved the bound using the multinomial theorem.
In contrast, Wettstein defined an equivalence relation using the coloring of points and then improved the bound using the fact that certain patterns cannot occur in the colorings because of geometric constraints.
Can we combine Wettstein's technique of analysis with our proof using the multinomial theorem?
Especially, for our algorithm to optimize the area of crossing-free spanning cycles, it is open whether there exists a point set where the algorithm runs exponentially faster than explicit enumeration because the current best lower bound to the maximum value of $\mathrm{sc}(P)$ is $\Omega^{*}(4.64^n)$~\cite{GNT00}.
Can we reduce the base of the time of our algorithm to less than 4.64?

\bibliography{reference}

\begin{thebibliography}{10}

\bibitem{AAHV07}
O.~Aichholzer, F.~Aurenhammer, C.~Huemer, and B.~Vogtenhuber.
\newblock Gray code enumeration of plane straight-line graphs.
\newblock {\em Graphs and Combinatorics}, 23(5):467--479, Oct 2007.
\newblock \href {http://dx.doi.org/10.1007/s00373-007-0750-z}
  {\path{doi:10.1007/s00373-007-0750-z}}.

\bibitem{ABCR15}
Victor Alvarez, Karl Bringmann, Radu Curticapean, and Saurabh Ray.
\newblock Counting triangulations and other crossing-free structures via onion
  layers.
\newblock {\em Discrete {\&} Computational Geometry}, 53(4):675--690, Jun 2015.
\newblock \href {http://dx.doi.org/10.1007/s00454-015-9672-3}
  {\path{doi:10.1007/s00454-015-9672-3}}.

\bibitem{AS13}
Victor Alvarez and Raimund Seidel.
\newblock A simple aggregative algorithm for counting triangulations of planar
  point sets and related problems.
\newblock In {\em Proceedings of the Twenty-ninth Annual Symposium on
  Computational Geometry}, SoCG '13, pages 1--8, New York, NY, USA, 2013. ACM.
\newblock \href {http://dx.doi.org/10.1145/2462356.2462392}
  {\path{doi:10.1145/2462356.2462392}}.

\bibitem{AF96}
David Avis and Komei Fukuda.
\newblock Reverse search for enumeration.
\newblock {\em Discrete Applied Mathematics}, 65(1):21--46, 1996.
\newblock First International Colloquium on Graphs and Optimization.
\newblock \href {http://dx.doi.org/10.1016/0166-218X(95)00026-N}
  {\path{doi:10.1016/0166-218X(95)00026-N}}.

\bibitem{CLRS09}
Thomas~H Cormen, Charles~E Leiserson, Ronald~L Rivest, and Clifford Stein.
\newblock {\em Introduction to algorithms}.
\newblock MIT press, 2009.

\bibitem{Fek00}
S.~P. Fekete.
\newblock On simple polygonalizations with optimal area.
\newblock {\em Discrete {\&} Computational Geometry}, 23(1):73--110, Jan 2000.
\newblock \href {http://dx.doi.org/10.1007/PL00009492}
  {\path{doi:10.1007/PL00009492}}.

\bibitem{FN99}
Philippe Flajolet and Marc Noy.
\newblock Analytic combinatorics of non-crossing configurations.
\newblock {\em Discrete Mathematics}, 204(1):203--229, 1999.
\newblock Selected papers in honor of Henry W. Gould.
\newblock \href {http://dx.doi.org/10.1016/S0012-365X(98)00372-0}
  {\path{doi:10.1016/S0012-365X(98)00372-0}}.

\bibitem{GNT00}
Alfredo Garc{\'i}a, Marc Noy, and Javier Tejel.
\newblock Lower bounds on the number of crossing-free subgraphs of {$K_N$}.
\newblock {\em Computational Geometry}, 16(4):211--221, 2000.
\newblock \href {http://dx.doi.org/10.1016/S0925-7721(00)00010-9}
  {\path{doi:10.1016/S0925-7721(00)00010-9}}.

\bibitem{HSSSTW13}
Michael Hoffmann, Andr{\'e} Schulz, Micha Sharir, Adam Sheffer, Csaba~D.
  T{\'o}th, and Emo Welzl.
\newblock {\em Counting Plane Graphs: Flippability and Its Applications}, pages
  303--325.
\newblock Springer New York, New York, NY, 2013.
\newblock \href {http://dx.doi.org/10.1007/978-1-4614-0110-0_16}
  {\path{doi:10.1007/978-1-4614-0110-0_16}}.

\bibitem{HM15}
Clemens Huemer and Anna de~Mier.
\newblock Lower bounds on the maximum number of non-crossing acyclic graphs.
\newblock {\em European Journal of Combinatorics}, 48:48--62, 2015.
\newblock Selected Papers of EuroComb’13.
\newblock \href {http://dx.doi.org/10.1016/j.ejc.2015.02.008}
  {\path{doi:10.1016/j.ejc.2015.02.008}}.

\bibitem{KT09}
Naoki Katoh and {Shin-ichi} Tanigawa.
\newblock Enumerating edge-constrained triangulations and edge-constrained
  non-crossing geometric spanning trees.
\newblock {\em Discrete Applied Mathematics}, 157(17):3569--3585, 2009.
\newblock Sixth International Conference on Graphs and Optimization 2007.
\newblock \href {http://dx.doi.org/10.1016/j.dam.2009.04.019}
  {\path{doi:10.1016/j.dam.2009.04.019}}.

\bibitem{Knu11}
Donald~E. Knuth.
\newblock {\em The Art of Computer Programming: Combinatorial Algorithms, Part
  1}.
\newblock Addison-Wesley Professional, 1st edition, 2011.

\bibitem{MM16}
D{\'a}niel Marx and Tillmann Miltzow.
\newblock Peeling and nibbling the cactus: Subexponential-time algorithms for
  counting triangulations and related problems.
\newblock In S{\'a}ndor Fekete and Anna Lubiw, editors, {\em 32nd International
  Symposium on Computational Geometry (SoCG 2016)}, volume~51 of {\em Leibniz
  International Proceedings in Informatics (LIPIcs)}, pages 52:1--52:16,
  Dagstuhl, Germany, 2016. Schloss Dagstuhl--Leibniz-Zentrum fuer Informatik.
\newblock \href {http://dx.doi.org/10.4230/LIPIcs.SoCG.2016.52}
  {\path{doi:10.4230/LIPIcs.SoCG.2016.52}}.

\bibitem{MM16full}
D{\'a}niel Marx and Tillmann Miltzow.
\newblock Peeling and nibbling the cactus: Subexponential-time algorithms for
  counting triangulations and related problems, 2016.
\newblock \href {http://arxiv.org/abs/1603.07340} {\path{arXiv:1603.07340}}.

\bibitem{SS11}
Micha Sharir and Adam Sheffer.
\newblock Counting triangulations of planar point sets.
\newblock {\em the electronic journal of combinatorics}, 18(1):70, 2011.
\newblock URL:
  \url{https://www.combinatorics.org/ojs/index.php/eljc/article/view/v18i1p70}.

\bibitem{SSW13}
Micha Sharir, Adam Sheffer, and Emo Welzl.
\newblock Counting plane graphs: Perfect matchings, spanning cycles, and
  {Kasteleyn's} technique.
\newblock {\em Journal of Combinatorial Theory, Series A}, 120(4):777--794,
  2013.
\newblock \href {http://dx.doi.org/10.1016/j.jcta.2013.01.002}
  {\path{doi:10.1016/j.jcta.2013.01.002}}.

\bibitem{shefferWeb}
Adam Sheffer.
\newblock Numbers of plane graphs.
\newblock \url{https://adamsheffer.wordpress.com/numbers-of-plane-graphs/}.
\newblock Accessed: 2019-11-16.

\bibitem{Sim00}
Rodica Simion.
\newblock Noncrossing partitions.
\newblock {\em Discrete Mathematics}, 217(1):367--409, 2000.
\newblock \href {http://dx.doi.org/10.1016/S0012-365X(99)00273-3}
  {\path{doi:10.1016/S0012-365X(99)00273-3}}.

\bibitem{Soh99}
Christian Sohler.
\newblock Generating random star-shaped polygons.
\newblock In {\em 11th Canadian Conference on Computational Geometry}, pages
  174--177, 1999.

\bibitem{Wet17}
Manuel Wettstein.
\newblock Counting and enumerating crossing-free geometric graphs.
\newblock {\em Journal of Computational Geometry}, 8(1):47--77, 2017.
\newblock \href {http://dx.doi.org/10.20382/jocg.v8i1a4}
  {\path{doi:10.20382/jocg.v8i1a4}}.

\bibitem{YHOUY19}
Katsuhisa Yamanaka, Takashi Horiyama, Yoshio Okamoto, Ryuhei Uehara, and Tanami
  Yamauchi.
\newblock Algorithmic enumeration of surrounding polygons.
\newblock In {\em 35th European Workshop on Computational Geometry (EuroCG
  2019)}, pages 1:1--1:6, 2019.
\newblock URL: \url{http://www.eurocg2019.uu.nl/papers/1.pdf}.

\bibitem{ZSSM96}
Chong Zhu, Gopalakrishnan Sundaram, Jack Snoeyink, and Joseph~S.B. Mitchell.
\newblock Generating random polygons with given vertices.
\newblock {\em Computational Geometry}, 6(5):277 -- 290, 1996.
\newblock Sixth Canadian Conference on Computational Geometry.
\newblock \href {http://dx.doi.org/10.1016/0925-7721(95)00031-3}
  {\path{doi:10.1016/0925-7721(95)00031-3}}.

\end{thebibliography}

\appendix

\section{Proof of \cref{lem:query}}\label{app:query}
For a vertex $\alpha$ of $\Gamma$, we define $\cnt{\alpha}$ as the number of $\alpha$-$\top$ paths in $\Gamma$.
We set $\cnt{\top} = 1$.
If $\alpha \neq \top$, then $\cnt{\alpha} = \sum_{(\alpha, \beta) \in E} \cnt{\beta}$, where $E$ is the edge set of $\Gamma$.
Using the formulas, we can recursively calculate $\cnt{\alpha}$ for every $\alpha$ in a reverse topological order.
Since there is the one-to-one correspondence between a solution and a $\bot$-$\top$ path, the number of solutions equals $\cnt{\bot}$.
It can be computed in $\bigO{\size{\Gamma}}$ time.

Enumeration can be done by depth-first search.
It takes $\mathrm{O}(h)$ time per solutions.

For random sampling, we use counting information.
First, we choose a random number $r$ between 1 and $\cnt{\top}$.
Then, we find the $r$-th $\bot$-$\top$ path in the following way.
We start from $\bot$.
Until we reach $\top$, we repeat the following.
When we are on a vertex $\alpha$,
we order the descendants of $\alpha$ in an arbitrary way: $\beta_1, \dots, \beta_k$.
Starting from $i = 1$, while $r > \cnt{\beta_i}$, we apply $r \gets r - \cnt{\beta_i}$ and $i \gets i + 1$.
We move to $\beta_j$, if $j$ is the first integer such that $r \le \cnt{\beta_j}$.
Until one reaches $\top$, there are at most $h$ vertices.
For each vertex, it takes $\bigO{n^2}$ time to select the appropriate descendant.
Thus, the total time is $\bigO{h n^2}$.
However, we can reduce the time to $\bigO{h \log n}$.
To achieve it, we calculate the cumulative sums for each vertex as a preprocessing.
We define $\mathrm{sum}(\alpha, i) := \sum_{j \le i} \cnt{\beta_j}$.
The number of the cumulative sums is the number of edges, so calculating all cumulative sums enlarges the complexity only by a constant factor.
Using the cumulative sums, we can find the appropriate descendant by binary search, which takes $\bigO{\log n^2} = \bigO{\log n}$ time.
Therefore, the total running time is $\bigO{h \log n}$.

Optimizing a linear function reduces to finding a shortest or longest $\bot$-$\top$ path in $\Gamma$.
This can be done in $\bigO{\size{\Gamma}}$ time in a similar way as counting.

% A short remark on the model of computation: since we use exponential space we have to work with a RAM with linear word size. Therefore it is fair to assume that our counters which can take on exponentially large values can be stored in a single word and can be added in constant time.

\section{Proof of \cref{lem:phi_coherent}}
Let $C_1, C_2 \in \prestr$ be non-empty (otherwise, the proof is trivial) with $C_1 \sim_{\phi} C_2$ and assume that $C_1 \xrightarrow{s} C_1'$ holds for $C_1' \in \prestr$ and $s \in \sgm$.
Consider $C_2' := C_2 \cup \set{s}$.
We show $C_2' \in \prestr$, $C_1' \sim_{\phi} C_2'$, and $C_2 \xrightarrow{s} C_2'$, which implies the coherency of $\sim_{\phi}$.

Since $\sim_{\tau}$ is coherent by \cref{lem:tau}, $C_2'$ is crossing-free.
Therefore, what is left for us to show $C_2' \in \prestr$ is that $C_2'$ satisfies both A1 and A2 in \cref{lem:prestr}.

Assume that $C_2'$ violates A1, that is, $C_2'$ contains a cycle.
Since $C_2$ is cycle-free, there exists only one cycle in $C_2'$ and it contains $s$.
Adding $s$ to $C_2$ generates a cycle if and only if the endpoints of $s$ is connected in $C_2$.
Then, there exists a set $U \in \Pi(C_2)$ such that $\pts{s} \subseteq U$.
Since $\Pi(C_1) = \Pi(C_2)$, the set $U$ is also in $\Pi(C_1)$ and adding $s$ to $C_1$ generates a cycle in $C_1'$.
This contradicts that $C_1'$ is cycle-free.

Assume that $C_2'$ violates A2, that is, there exists a hidden component in $C_2'$.
Then, there exists a set $U \in \Pi(C_2)$ such that $U \subseteq \low{s}$.
Since $\Pi(C_1) = \Pi(C_2)$, $U$ is also in $\Pi(C_1)$ and the connected component containing $U$ in $C_1$ is hidden in $C_1'$, contradicting that $C_1'$ has no hidden components.
This finishes the proof of $C_2' \in \prestr$.

Next we show $C_1' \sim_{\phi} C_2'$.
Since $\sim_{\tau}$ is coherent by \cref{lem:tau}, $(W(C_1'), G(C_1'), B(C_1'), m(C_1')) = (W(C_2'), G(C_2'), B(C_2'), m(C_2'))$ holds.
Therefore, we have only to show $\Pi(C_1') = \Pi(C_2')$.

In the following, for a partition $\Pi$ of a set $E$ and a subset $T \subseteq E$,
$\Pi \restriction T$ denotes the \emph{restriction of $\Pi$ to $T$}, that is, $\Pi \restriction T := \inset{U \cap T}{U \in \Pi} \setminus \set{\emptyset}$.
For $C \in \prestr$, we define $\Pi^{+}(C)$ as the partition of $W(C) \cup G(C)$ such that every two points $x, y \in W(C) \cup G(C)$ is connected in $C$ if and only if they are in the same set of $\Pi^{+}(C)$.
In other words, $\Pi^{+}(C) = \Pi(C) \cup \inset{\set{p}}{p \in W(C)}$.
From the above discussion, the endpoints $a$ and $b$ of $s$ are in the different sets in $\Pi^{+}(C_1)$, otherwise, adding $s$ to $C_1$ generates a cycle in $C_1'$.
Let $U_a$ and $U_b$ be the sets containing $a$ and $b$ in $\Pi^{+}(C_1)$, respectively.
In $C_1'$, the vertices $x$ and $y$ are connected if and only if 1) $x$ and $y$ are in the same set in $\Pi^{+}(C_1)$ or 2) one of $x$ and $y$ is in $U_a$ and the other is in $U_b$.
Therefore, $\Pi(C_1') = ((\Pi^{+}(C_1) \setminus \set{U_a, U_b}) \cup \set{U_a \cup U_b}) \restriction G(C_1')$.

Since $\Pi(C_2) = \Pi(C_1)$ and $W(C_2) = W(C_1)$, it follows that $\Pi^{+}(C_2) = \Pi(C_2) \cup \inset{\set{p}}{p \in W(C_2)} = \Pi(C_1) \cup \inset{\set{p}}{p \in W(C_1)} = \Pi^{+}(C_1)$.
Thus, the sets $U_a$ and $U_b$ in $\Pi(C_1)$ are also contained in $\Pi(C_2)$, and the endpoints $a$ and $b$ of $s$ are also in $U_a$ and $U_b$, respectively.
Therefore, $\Pi(C_2') = ((\Pi^{+}(C_2) \setminus \set{U_a, U_b}) \cup \set{U_a \cup U_b}) \restriction G(C_2')$.
Since $\Pi^{+}(C_2) = \Pi^{+}(C_1)$ and $G(C_2') = G(C_1')$, we obtain $\Pi(C_2') = \Pi(C_1')$.

Lastly, we show $C_2 \xrightarrow{s} C_2'$.
To do this, it suffices to show that $s = \rex{C_2'}$ and $s \notin C_2'$.
The former holds because $\sim_{\tau}$ is coherent.
Assume that the latter is false: $s \in C_2'$.
In this case, $C_1 \sim_{\phi} C_2 = C_2' \sim_{\phi} C_1'$ holds.
Combining it with $C_1 \xrightarrow{s} C_1'$ means that $\prestr$ is not progressive on $\sim_{\phi}$.
However, since $\cfg \supseteq \prestr$ is progressive on $\sim_{\tau}$, the set $\prestr$ is progressive on $\sim_{\phi}$, which is a contradiction.

\section{Proof of \cref{theo:spanning_tree}}\label{app:spanning_tree}
The bound on the size of a combination graph is obtained from \cref{lem:size,lem:phi_size}.
To show the time bound, it suffices to show that, for every $C \in \prestr$ and $C'$ such that $C \xrightarrow{s} C'$, whether $C'$ is in $\prestr$ or not can be checked in $\bigO{n}$ time using only $\phi(C')$, not using the exact content of $C'$.

$C'$ is crossing-free because $s = \rex{C'}$.
From the proof of \cref{lem:phi_coherent}, $C'$ contains a cycle (i.e., violates A1) if and only if the two endpoints of $s$ are in the same set in $\Pi(C')$.
$C'$ contains a hidden component (i.e., violates A2) if and only if the there exists a set $U \in \Pi(C)$ such that $U \subseteq \low{s}$.
All the conditions can be checked in $\bigO{n}$ time.

\section{Proof of \cref{lem:psi_coherent}}

% \begin{lemma}\label{lem:prescy}
%     Let $C \subseteq \sgm$ be a set of crossing-free segments.
%     If $C$ is a prefix of a crossing-free spanning cycle $C^* \in \scy$, then all of the following hold:
%     \begin{enumerate}
%         \item[B1.] $\degr{p}{C} \le 2$ for every point $p$,
%         \item[B2.] for every point $p$ such that $\degr{p}{C} < 2$, $C$ has no segment $s$ such that $p \in \low{s}$, and
%         \item[B3.] there are no hidden components in $C$.
%     \end{enumerate}
% \end{lemma}

Let $C_1, C_2 \in \prescy$ be non-empty (otherwise, the proof is trivial) with $C_1 \sim_{\psi} C_2$ and assume that $C_1 \xrightarrow{s} C_1'$ holds for $C_1' \in \prescy$ and $s \in \sgm$.
Consider $C_2' := C_2 \cup \set{s}$.
We show $C_2' \in \prescy$, $C_1' \sim_{\psi} C_2'$, and $C_2 \xrightarrow{s} C_2'$, which implies the lemma.

To show $C_2' \in \prescy$, we show that $C_2'$ satisfies all the conditions from B1 to B3 in \cref{lem:prescy}.
Assume that $C_2'$ violates B1.
Then, there exists a point $p \in P$ such that $\degr{p}{C_2'} > 2$.
Since $C_1 \sim_{\psi} C_2$, we obtain $\degr{p}{C_1'} = \degr{p}{C_1} + 1 = \degr{p}{C_2} + 1 = \degr{p}{C_2'} > 2$, which contradicts that $C_1' \in \prescy$.

Assume that $C_2'$ violates B2.
Since $C_2$ satisfies B2, $s$ is the unique segment in $C_2'$ such that there exists a point $p \in \low
{s}$ with $\degr{p}{C_2'} < 2$.
By $p \in \low{s}$, $s$ is not incident to $p$.
Thus, adding $s$ to $C_1$ (or $C_2$) does not increase the degree of $p$.
Therefore, $\degr{p}{C_1'} = \degr{p}{C_1} = \degr{p}{C_2} = \degr{p}{C_2'} < 2$.
It contradicts that $C_1' \in \prescy$.

We can show that $C_2'$ satisfies B3 in the same way as the proof for A2 in \cref{lem:phi_coherent}.
It finishes the proof of $C_2' \in \prescy$.

Next, we show $C_1' \sim_{\psi} C_2'$, i.e., $\psi(C_1') = \psi(C_2)$.
First we show that $W(C_1') = W(C_2')$, $G(C_1') = G(C_2')$, and $B(C_1') = B(C_2')$.
Adding the segment $s$ to $C_1$ (or $C_2$) increases the degrees of the two endpoints of $s$ both by one and does not change the degrees of the other points.
Therefore, if a point $p$ is an endpoint of $s$, then $\degr{p}{C_1'} = \degr{p}{C_1} + 1 = \degr{p}{C_2} + 1 = \degr{p}{C_2'}$ holds.
Otherwise, $\degr{p}{C_1'} = \degr{p}{C_1} = \degr{p}{C_2} = \degr{p}{C_2'}$ holds.
Therefore, we have $W(C_1') = W(C_2')$, $G(C_1') = G(C_2')$, and $B(C_1') = B(C_2')$.

Next we show $m(C_1') = m(C_2')$.
To prove it, it suffices to show that $\rex{C_2'} = \rex{C_1'} = s$.
We first show that $s$ is extreme in $C_2'$.
Since $C \sim_{\psi} C'$, $\pts{C} = G(C) \cup B(C) = G(C') \cup B(C') = \pts{C'}$.
Combining this with the fact that $s$ is extreme in $C_1$, we obtain $\pts{s} \cap \low{C_2} \subseteq \pts{s} \cap \pts{C_2} = \pts{s} \cap \pts{C_1} = \emptyset$ and $\upp{s} \cap \pts{C_2} = \upp{s} \cap \pts{C_1} = \emptyset$.
Therefore, $s$ is extreme in $C_2'$.
Now, assume that $\rex{C_2'} = s' \neq s$.
Since both $s$ and $s'$ are extreme in $C_2'$, $s \prec s'$ holds, which means that $\rex{C_2} = s'$.
Combining it with $C_1 \sim_{\psi} C_2$ yields $s \prec \rex{C_1}$, which contradicts that $s = \rex{C_1'}$.

We can show $\mathcal{M}(C_1') = \mathcal{M}(C_2')$ in the same way as the proof of $\Pi(C_1') = \Pi(C_2')$ in \cref{lem:phi_coherent}.
It finishes the proof of $C_1' \sim_{\psi} C_2'$.

To show $C_2 \xrightarrow{s} C_2'$, it suffices to show that $s = \rex{C_2'}$ and $s \notin C_2$.
We have already proven the former.
To show the latter, it suffices to show that $\prescy$ is progressive on $\sim_{\psi}$.
For any $C, C' \in \prescy$ and $s \in \sgm$ such that $C \xrightarrow{s} C'$, $\sum_{p \in P} \degr{p}{C'} > \sum_{p \in P} \degr{p}{C}$ holds.
Therefore, $(W(C), G(C), B(C)) \neq (W(C'), G(C'), B(C'))$ holds, which means that $\prescy$ is progressive on $\sim_{\psi}$.

\section{Proof of \cref{theo:spanning_cycle}}
The bound on the size of a combination graph is obtained from \cref{lem:size,lem:psi_size}.
To show the time bound, it suffices to show that, for every $C \in \prescy$ and $C'$ such that $C \xrightarrow{s} C'$, whether $C'$ is in $\prescy$ or not can be checked in $\bigO{n}$ time.
$C'$ is crossing-free because $s = \rex{C'}$.
B1 can be checked in $\bigO{n}$ time.
To check if $C'$ satisfies the conditions from B1 to B3, it suffices to prune the cases that 1) there exists a point $p \in P$ such that $\degr{p}{C} > 2$, 2) there exists a point $p \in \low{s}$ such that $\degr{p}{C} < 2$, and 3) $\pts{s} \in \mathcal{M}$ and $\size{\mathcal{M}} \ge 2$.
All the conditions can be checked in $\bigO{n}$ time.

% \begin{lemma}\label{lem:prescy}
%     Let $C \subseteq \sgm$ be a set of crossing-free segments.
%     If $C$ is a prefix of a crossing-free spanning cycle $C^* \in \scy$, then all of the following hold:
%     \begin{enumerate}
%         \item[B1.] $\degr{p}{C} \le 2$ for every point $p$,
%         \item[B2.] for every point $p$ such that $\degr{p}{C} < 2$, $C$ has no segment $s$ such that $p \in \low{s}$, and
%         \item[B3.] there are no hidden components in $C$.
%     \end{enumerate}
% \end{lemma}

\end{document}